\documentclass[a4paper, reqno]{amsart}
\usepackage{a4,wasysym}
\setlength{\topmargin}{1.5cm}
\setlength{\textwidth}{13.5cm}
\setlength{\textheight}{17.5cm}
\setlength{\footskip}{.7cm}
\addtolength{\evensidemargin}{1mm}
\addtolength{\oddsidemargin}{1mm}
\setlength{\headsep}{.6cm}
\setlength{\headheight}{4mm}

\usepackage{amsmath,graphicx, amssymb,color,ulem,bbm}

\pagestyle{myheadings} \markboth{{\small authors}}
{{\small .... }}
\normalem

\newtheorem{theorem}{Theorem}

\newtheorem{lemma}[theorem]{Lemma}

%================= macro definitions ========================

%\newcommand{\eofproof}{{ \hspace{0.2cm}\\
%        \mbox{\,}\hfill\framebox[0.3cm]{\rule{0cm}{.1cm} } }}

%==================================================================

\begin{document}

\title[Modelling rapid evolution]
{Modelling effects of rapid evolution on persistence and stability in structured predator-prey systems}

\author[J\'{o}zsef Z. Farkas]{J\'{o}zsef Z. Farkas}
\address{J\'{o}zsef Z. Farkas, Division of Computing Science and Mathematics, University of Stirling, Stirling, FK9 4LA, United Kingdom }
\email{jozsef.farkas@stir.ac.uk}

\author[A. Yu. Morozov]{A. Yu. Morozov}
\address{A. Yu. Morozov, Department of Mathematics, University of Leicester, Leicester, LE1 7RH, UK; and  
 Shirshov Institute of Oceanology, Moscow, 117997, Russia}
\email{}

\subjclass{47A75, 45K05, 92D40}
\keywords{Integro-differential equations,  structured populations,
population persistence, stability analysis, spectral theory of operators.}
\date{\today}

\begin{abstract}
In this paper we explore the eco-evolutionary dynamics of a predator-prey model, where the prey population is structured according to a certain life history trait. The trait distribution within the prey population is the result of interplay between genetic inheritance and mutation, as well as selectivity in the consumption of prey by the predator. The evolutionary processes are considered to take place on the same time scale as ecological dynamics, i.e. we consider the evolution to be rapid. Previously published results show that population structuring and rapid evolution in such predator-prey system can stabilize an otherwise globally unstable dynamics even with an unlimited carrying capacity of prey. However, those findings were only based on direct numerical simulation of equations and obtained for particular parameterizations of model functions, which obviously calls into question the correctness and generality of the previous results. The main objective of the current study is to treat the model analytically and consider various parameterizations of predator selectivity and inheritance kernel. We investigate the existence of a coexistence stationary state in the model and  carry out stability analysis of this state. We derive expressions for the Hopf bifurcation curve which can be used for constructing  bifurcation diagrams in the parameter space without the need for a direct numerical simulation of the underlying integro-differential equations. We  analytically show the possibility of stabilization of a globally unstable predator-prey system with prey structuring.  We prove that the coexistence stationary state is stable when the saturation in the predation term is low. Finally, for a class of kernels describing genetic inheritance and mutation we show that stability of the predator-prey interaction will require a selectivity of predation according to the life trait.
\end{abstract}
\maketitle

\section{Introduction}

 It is becoming increasingly apparent that biological evolution of organisms' life history traits can occur on a similar time scale to their population dynamics \cite{Thompson1998,Yoshida_etal2003, Duffy_2007, Matthews_etal2011, Fussmann_Gonzalez2013}, and that such rapid evolution can shape the ecological dynamics of interacting species \cite{Ellner_2011, Jone_etal2009, Morozov2013}. In particular, predator-prey and host-parasite interactions can be dramatically affected by rapid genetic variation within the populations \cite{Dube_2002, Hairston_2008, Reznick_2008, Johnson_2009, Jone_etal2009}, and we need to take this fact into account in food-web and epidemiological models. For instance, it was demonstrated in \cite{Jones_2007, Jone_etal2009} that rapid genetic variation can enlarge the period and modify the  phase relations in predator-prey cycles.   Recently, using a parsimonious eco-evolutionary predator-prey model, Morozov \textit{et al.} have shown that the interplay between a fast genetic variation within the prey and the selectivity of consumption of the prey by its predator can suppress large amplitude oscillations in the system \cite{Morozov2011, Morozov2013}. Interestingly, the reported stabilization could occur even in the case when the carrying capacity of the prey is infinitely large. In contrast, the same system with non-changing life history traits of species would be globally unstable, see e.g. \cite{Oaten_Murdoch_1975}. Thus, a rapid evolution could potentially provide an extra mechanism of persistence for trophic interaction in a highly eutrophic environment (known also as the top-down control) \cite{Morozov2011, Morozov2013}. This mechanism would answer the long standing open question in theoretical ecology about possible solutions of the paradox of enrichment: what factors can stabilize predator-prey interactions in the case the supply of resource for prey is large?  \cite{Rosenzweig1971, Oaten_Murdoch_1975, Abrams_1996, Morozov_etal_2011}.

  The previous results on stabilization in the eco-evolutionary predator-prey model with prey structuring in a highly eutrophic ecosystems were obtained by directly simulating the underlying model equations for particular parameterisations of the model functions \cite{Morozov2013}. Obviously, this cannot be considered as a rigorous proof of stability. Furthermore, considering other functional parameterizations of the inheritance kernel as well as the selectivity of predation can potentially affect the results obtained only using simulations. Thus, the central question is whether or not the main conclusions in \cite{Morozov2011, Morozov2013} on the importance of within population structuring on persistence and stability will be generic.

  The main objective of this paper is to explore in detail the mathematical properties of the eco-evolutionary predator-prey model \cite{Morozov2011, Morozov2013}. Our analytical investigation is centred around two main issues: the existence of non-trivial stationary state(s) of the model (assuring the coexistence of the prey and predator), and the stability of the stationary state(s) for a general family of functions (kernels) describing genetic inheritance within the prey.  We also establish global existence and positivity of solutions of the model.

Mathematically we consider the following system of integro-differential equations (which is a generalization of that in \cite{Morozov2013}):
\begin{eqnarray}
\label{eq:Zoo}
\frac{\partial}{\partial t} z(x,t) &=& \int_{x_1}^{x_2}k(x,y)z(y,t)\, dy - F(t)\frac{a(x)z(x,t)}{1+\beta \int_{x_1}^{x_2}z(x,t)\,dx},\\
\label{eq:Fish}
\frac{d}{d t}F(t) &=&F(t)\left(\frac{K\int_{x_1}^{x_2}a(x)z(x,t)\, dx}{1+\beta\int_{x_1}^{x_2}z(x,t)\, dx}-\mu\right),
\end{eqnarray}

\noindent where $z(x,t)$ describes the distribution of prey individuals across the cohorts according to a certain life history trait $x$, which can vary between $x_1$ and $x_2$. Hence the integration of $z(x,t)$ over $[x_1,x_2]$ gives the total biomass of prey in the population, while $F(t)$ is the total biomass of the predator at time $t$. Here we consider the structuring life history trait $x$ to be an abstract parameter: in practice this can be the growth rate, body length (at the adult stage), individual mobility \cite{Petrovskii_2010} or it can be the animals personality \cite{Wolf_2012}.  The life trait is genetically inherited, i.e. it does not change throughout the lifetime. For the sake of simplicity we assume that the predator population is genetically uniform (i.e. there is no structuring of the predator population).

The integral term in equation (\ref{eq:Zoo}) stands for the growth rate of prey due to the reproduction of all cohorts. The kernel $k(x,y)$ describes genetic inheritance and mutation. Each cohort can produce offspring with the life trait within  $[x_1,x_2]$. The contribution to the growth rate of cohort $x$ due to reproduction of the group of cohorts with life traits in the interval $[y-dy/2,y+dy/2]$ will be given by the product of the population size of this group $z(y,t)dy$ and the kernel $k(x,y)$,with $dy$ being sufficiently small. Thus the kernel function includes both the reproduction rate of group of cohorts centred at $y$ and the probability of producing offspring with trait $x$.  The total contribution of all cohorts to the production of the cohort $x$ is given by integration of $k(x,y)z(x,t)dy$ over the the entire interval $[x_1,x_2]$.

The mortality of prey in the model is due to predation only (i.e. we neglect natural mortality), which is achieved via  a standard Holling type II parametrization, see e.g. \cite{Holling1959, Kot2001, Gentleman2003}, where $\beta$ characterises the saturation of predation at high densities of the prey. The vulnerability of the prey to predation is cohort-dependent and is described by the function $a=a(x)$, i.e. there is prey selectivity by the predator according to the life trait $x$. $K$ is the food conversion coefficient describing the efficiency of transformation of the consumed prey biomass. The predator is subject to a natural mortality $\mu$ which is assumed to be constant.

Model (\ref{eq:Zoo})-(\ref{eq:Fish}) can be considered as a standard Rosenzweig-MacArthur model (see \cite{RosenzweigMacArthur1963, Rosenzweig1971, Allen_2007}) in which genetic structuring and rapid evolution has been incorporated. Our model is similar to that of \cite{Jone_etal2009}, where evolution of cohorts of predator and prey in a chemostat was considered. However, unlike the model in \cite{Jone_etal2009}, here we allow mutations of the clones. In particular, the kernel $k(x,y)$ can be constant which can model the scenario of perfect genetic mixing within the population.

It is of importance to mention that we intentionally consider the carrying capacity of prey population to be infinitely large, i.e. we model predator-prey interactions in a highly eutrophic environment. It is well known that the same model without population structuring is globally unstable and the species persistence is impossible \cite{Oaten_Murdoch_1975}, thus the main question explored using model (\ref{eq:Zoo})-(\ref{eq:Fish}) is whether or not population structuring and rapid evolution (as well as  selectivity of the predator) can eventually stabilize this otherwise unstable system. Note that by rapid evolution we understand here variation in the cohort distribution $z(x,t)$ which occurs on the same time scale as the ecological dynamics (i.e. variation of the total biomass of prey and predator).

Unlike Morozov \textit{et al.} \cite{Morozov2013}, we explore the model for arbitrary parametrisations of the vulnerability to predation $a(x)$ and the inheritance kernel $k(x,y)$. We only make the following natural assumptions regarding the model parameters and $a$ and $k$
\begin{align*}
& 0<a<\overline{a}<\infty,\quad 0<k<\overline{k}<\infty,\quad 0\le\beta,\quad 0<K,\mu.
\end{align*}

Global coexistence and positivity of solutions of model (\ref{eq:Zoo})-(\ref{eq:Fish}), based on the above assumptions for the parameters and integral kernels can be readily established for example using methods developed in \cite{Henry_1981}.

It is worthwhile to point out that unlike in the case of ``standard" physiologically structured population models, see for example the classical size-structured models in \cite{Cushing_1998, FH_2007} (and the references therein), in our model individuals are structured with respect to a genetically inherited property, which implies that there is no transport term in equation \eqref{eq:Zoo}. This allows us to work in the framework of bounded operators. On the other hand, recruitment is represented by an integral operator, which is of infinite rank, in general. Eigenvalue problems involving integral operators are often difficult to analyse. See for example the recent papers \cite{AF_2013,FGH_2010}, where structured population models with distributed recruitment processes (modelled by an integral operator) were analysed. We also note that even though the operators arising from equations \eqref{eq:Zoo}-\eqref{eq:Fish} are bounded, positivity of solutions cannot be immediately established for example from the variation of constants formula.  This is due to the negative feedback between predator and prey populations.

In the next two sections we show the existence of the nontrivial stationary state (coexistence state) of the model for a particular class of kernels, which is actually a separable kernel of a finite rank. Then we derive the characteristic equations to obtain the stability condition of the nontrivial stationary state. Using the obtained generic condition we consider the stability property in a few particular biologically relevant cases, for instance, in the case where the kernel is constant. Finally, we address the important question of the necessity of predation selectivity for the stabilization of model  (\ref{eq:Zoo})-(\ref{eq:Fish}), i.e. whether or not the stability of the coexistence state is possible in the case $a(x) \equiv A$. We summarize our results in the Discussion section.

%%%%%%%%%%%%%%%%%%%%%%%%%%%%%%%%%%%%%%%%%%%%%%%%%%%%%%%%%%%%%%%%%%%%%%%%%%%%%%%%%%%%%%%%%%%%%%%%%%%%%%%%%%%%%%%%%%%%%%%%%%%%%%%%%%%%%%%%%%%%%%%%%%%%%%%%%%%%%%%%%%%%%%%%%%%%%%%%%%%%%%%%%%%%%%%%%%%%%%%%%%%%%%%

\section{Existence of stationary states}
\noindent Note that model (\ref{eq:Zoo})-(\ref{eq:Fish}) always admits the trivial stationary state $(0,0)$.
Throughout the paper we assume that the kernel $k$ is strictly positive, but for the sake of completeness we note that if the kernel $k$ is not strictly positive then our model may admit stationary  states of the form $(0,z^*(x))$, where
$z^*$ satisfies the integral equation
\begin{equation}
\int_{x_1}^{x_2}k(x,y)z^*(y)\, dy \equiv 0.
\end{equation}
For example if $k$ vanishes for every $y$ on some interval $[y_1,y_2]$ then any function $z^*$ which is supported on the interval $[y_1,y_2]$ will be, together with $F^*=0$, a non-trivial stationary state.

In the next two sections we discuss the existence of strictly positive (coexistence or `non-trivial') stationary  states of model (\ref{eq:Zoo})-(\ref{eq:Fish}). We start with the relatively simple case of a separable kernel $k$. In this case
the integral operator describing the recruitment process is of rank one (at most). Then, we will discuss the case of a finite rank approximation of the kernel $k$, and finally,  for a general kernel $k$, we will reformulate the steady state problem as an eigenvalue problem for a bounded operator.

\subsection{Separable kernel}

Assume that $k(x,y)=k_1(x)\bar{k}_1(y)$, for some functions $k_1,\,\bar{k}_1$, i.e. that the kernel is separable. We look for a strictly positive stationary solution $(z^*(x),F^*)$. The
stationary state equations read:
\begin{eqnarray}
\label{sseq1}
k_1(x)\int_{x_1}^{x_2}\bar{k}_1(y)z^*(y)dy &=& F^*\frac{a(x)z^*(x)}{1+\beta\int_{x_1}^{x_2}z^*(x)dx}, \\
\label{sseq2}
\frac{\mu}{K} &=& \frac{\int_{x_1}^{x_2}a(x)z^*(x)dx}{1+\beta\int_{x_1}^{x_2}z^*(x)dx}.
\end{eqnarray}
We introduce the following notation:
\begin{eqnarray}
\label{z1z2}
Z^*\equiv Z_1=\int_{x_1}^{x_2}z^*(x)dx,\quad Z_2=\int_{x_1}^{x_2}a(x)z^*(x)dx,\quad Z_3=\int_{x_1}^{x_2} \bar{k}_1(x)z^*(x)dx.
\end{eqnarray}
Assuming that $a>0$ we obtain from equations (\ref{sseq1}) the following set of scalar equations for the variables
$\left(Z_1,Z_2,Z_3,F^*\right)$.
\begin{eqnarray}
Z_3\int_{x_1}^{x_2}k_1(x)dx & =& F^*\frac{Z_2}{1+\beta Z_1}, \label{scalar1} \\
Z_3\int_{x_1}^{x_2}\frac{k_1(x)}{a(x)}dx & =& F^*\frac{Z_1}{1+\beta Z_1}, \label{scalar2} \\
\int_{x_1}^{x_2}\frac{k_1(x) \bar{k}_1(x)}{a(x)}dx & =& F^*\frac{1}{1+\beta Z_1}, \label{scalar3} \\
\frac{\mu}{K} & =& \frac{Z_2}{1+\beta Z_1}. \label{scalar4}
\end{eqnarray}
One can easily see that a unique positive solution $(z^*(x),F^*)$ of (\ref{sseq1})-(\ref{sseq2}) exists if and only if the 4-dimensional scalar system (\ref{scalar1})-(\ref{scalar4}) above has a strictly positive solution.
We introduce the notation
\begin{equation*}
w=\frac{\mu\int_{x_1}^{x_2}\frac{k_1(x)}{a(x)}\,d x}{K\int_{x_1}^{x_2}k_1(x)\,d x}.
\end{equation*}
Hence, we obtain from (\ref{scalar2})-(\ref{scalar4}) a unique positive solution $Z^*\equiv Z_1=\frac{w}{1-\beta w}$,
as long as $1-\beta w>0$. Substituting this into equation (\ref{scalar3}) yields a unique positive solution $F^*=\frac{1}{1-\beta w}\int_{x_1}^{x_2}\frac{k_1(x)\bar{k}_1(x)}{a(x)}\, d x$. From (\ref{scalar1})-(\ref{scalar4}) we obtain $Z_3=\frac{1}{1-\beta w}\frac{\mu\int_{x_1}^{x_2}\frac{k_1(x) \bar{k}_1(x)}{a(x)}\,d x}{K\int_{x_2}^{x_2}k_1(x)\,d x}$,
and finally from equation (\ref{scalar4}) we obtain $Z_2=\frac{\mu}{K(1-\beta w)}$. Hence the model admits a unique positive stationary state if
\begin{equation}
1>\beta\frac{\mu\int_{x_1}^{x_2}\frac{k_1(x)}{a(x)}\,d x}{K\int_{x_1}^{x_2}k_1(x)\,d x}.\label{sscond1}
\end{equation}
The condition above is satisfied, for instance, in the case when the saturation $\beta$ is not very large and vulnerability to predation $a(x)$ is large enough.

\subsection{Finite rank approximation of the kernel}
The approach above may be extended to the more general case of the kernel:
\begin{equation}
\label{finite_rank}
k(x,y)=\displaystyle\sum_{i=1}^nk_i(x)\bar{k}_i(y),
\end{equation}
i.e. for a finite rank approximation of the kernel $k$ with non-negative functions $k_i,\bar{k}_i,\,i=1,\cdots ,n$.
With $Z_1$ and $Z_2$ as defined in (\ref{z1z2}), and introducing the new variables:
\begin{equation*}
Z_3=\int_{x_1}^{x_2}\bar{k}_1(x)z^*(x)\,d x,\,\cdots,\, Z_{n+2}=\int_{x_1}^{x_2} \bar{k}_n(x)z^*(x)\,d x,
\end{equation*}
the stationary state equations of model (\ref{eq:Zoo})-(\ref{eq:Fish}) read:
\begin{align}
& F^*\frac{a(x)z^*(x)}{1+\beta Z_1}=\sum_{i=1}^nZ_{i+2}k_i(x),\quad \frac{\mu}{K}=\frac{Z_2}{1+\beta Z_1}.\label{sseq3}
\end{align}
Equations (\ref{sseq3}) lead to the following $n+3$ dimensional scalar system:
\begin{align}
& \frac{\mu}{K}=\frac{Z_2}{1+\beta Z_1},\,F^*\frac{Z_1}{1+\beta Z_1}=\displaystyle\sum_{i=1}^nZ_{i+2}\int_{x_1}^{x_2}\frac{k_i(x)}{a(x)}\,d x,\, F^*\frac{Z_2}{1+\beta Z_1}=\sum_{i=1}^nZ_{i+2}\int_{x_1}^{x_2}k_i(x)\,d x, \label{sseq1sep} \\
& F^*\frac{Z_3}{1+\beta Z_1}=\sum_{i=1}^nZ_{i+2}\int_{x_1}^{x_2}\bar{k}_1(x)\frac{k_i(x)}{a(x)}\,d x,\quad\cdots,\quad F^*\frac{Z_{n+2}}{1+\beta Z_1}=\sum_{i=1}^{n}Z_{i+2}\int_{x_1}^{x_2}\bar{k}_n(x)\frac{k_i(x)}{a(x)}\,d x. \label{sseq2sep}
\end{align}
Condition (\ref{sscond1}) for a separable kernel suggests that for small enough values of the parameter $\beta$ one should be able to establish existence (but not necessarily uniqueness) of a positive stationary state. In particular, for $\beta=0$
the first and third equation of (\ref{sseq1sep}) together yield

\begin{equation}
F^*=\sum_{i=1}^n \alpha_iZ_{i+2},\quad \alpha_i=\frac{K}{\mu}\int_{x_1}^{x_2}k_i(x)\,d x,\quad i=1,\cdots,n.\label{sseqsepF}
\end{equation}
With this, the $n$-dimensional nonlinear scalar system \ref{sseq2sep} can be cast in the form:
\begin{equation}
Z_3\sum_{i=1}^{n}\alpha_iZ_{i+2}=\sum_{i=1}^n\kappa_{1,i}Z_{i+2},\,\cdots,\, Z_{n+2}\sum_{i=1}^n\alpha_iZ_{i+2}=\sum_{i=1}^n\kappa_{n,i}Z_{i+2}, \label{sseq3sep}
\end{equation}
where
\begin{equation}
\kappa_{i,j}=\int_{x_1}^{x_2}\frac{\bar{k}_i(x)k_j(x)}{a(x)}\,d x,\quad i,j=1,\cdots,n.\label{kappaij}
\end{equation}

Once a non-negative solution of system (\ref{sseq3sep}) is found it can be substituted into equation (\ref{sseqsepF})
to determine $F^*$, which using the second equation of (\ref{sseq1sep}) determines a unique positive $Z_1$.

To establish existence of a positive solution of system (\ref{sseq3sep}) we utilise an idea which was employed for infinite dimensional problems recently, for example, in \cite{AF_2013, FGH_2010}. The key idea is to recast the non-linear problem (\ref{sseq3sep})
as an eigenvalue problem for a parameterised family of matrices. That is, we rewrite system (\ref{sseq3sep}) as
\begin{equation}
{\bf Z}=K_c\,{\bf Z},\quad {\bf Z}=(Z_3,\cdots,Z_{n+3})^T,\quad K_c(i,j)=\frac{\kappa_{i,j}}{c},\quad i,j=1,\cdots, n,\quad c>0,
\end{equation}
where
\begin{equation}
c=\sum_{i=1}^n\alpha_iZ_{i+2}.
\end{equation}

Note that for any $c>0$ the matrix $K_c$ is non-negative. It follows from Perron-Frobenius theory that the spectral radius $r(K_c)$  is an eigenvalue with a corresponding non-negative eigenvector. Also note that the function $c\to r(K_c)$ is continuous for $c\in (0,\infty)$. It follows for example from Gershgorin's Circle Theorem that as $c\to\infty$ we have $r(K_c)\to 0$. Hence if there exists a value $\bar{c}\in (0,\infty)$ such that $r(K_{\bar{c}})>1$, then there exists a $c_*$ such that $r(K_{c_*})=1$ and therefore $1$ is an eigenvalue with a corresponding non-negative eigenvector ${\bf Z^*}$. We then normalize this eigenvector such that it satisfies
\begin{equation}
c_*=\sum_{i=1}^n\alpha_i Z^*_{i+2}.
\end{equation}
Note that the existence of a $\bar{c}$ such that $r(K_{\bar{c}})>1$ holds depends on the $\kappa_{i,j}$ values, i.e. on the approximation of the kernel $k(x,y)$.
In particular if the kernel $k$ is such that for some $x$ value it is concentrated on the diagonal point $(x,x)$  then
again Gershgorin's Circle Theorem implies the existence of a $\bar{c}$ such that $r(K_{\bar{c}})>1$.

Also note that if the kernel $k$ is such (typically strictly positive) that it can be approximated with $\displaystyle\sum_{i=1}^n k_i(x)\bar{k}_i(y)$ such that $\kappa_{i,j}\ne 0$ for $i,j=1,\cdots,n$, then the matrix $K_c$ above is positive, and the Perron-Frobenius Theorem guarantees the existence of a strictly positive eigenvector ${\bf Z}$.
Also if the matrix $K_c$ is positive, then it is clear that for any $c<\displaystyle\min_{i,j}\left\{\kappa_{i,j}\right\}$ we have $r(K_c)>1$, hence it follows from the Intermediate Value Theorem that there exists a $c_*$ such that $r(K_{c_*})=1$. We summarize our findings in the following lemma.
\begin{lemma}
Assume that $k(x,y)=\displaystyle\sum_{i=1}^nk_i(x)\bar{k}_i(y)$, such that $\kappa_{i,j}>0$ for $i,j=1,\,\cdots,\, n$.
Then model (\ref{eq:Zoo})-(\ref{eq:Fish}) admits a strictly positive stationary state for $\beta=0$.
\end{lemma}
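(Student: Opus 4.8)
The plan is to assemble the ingredients developed in the discussion preceding the lemma into a single existence argument; the whole strategy is to convert the nonlinear stationary problem into the search for a fixed point of a parameterised eigenvalue problem. First I would record the reduction: for $\beta=0$ the stationary state equations collapse, via (\ref{sseqsepF}), to the $n$-dimensional nonlinear system (\ref{sseq3sep}) in the unknowns ${\bf Z}=(Z_3,\ldots,Z_{n+2})^T$. Producing a strictly positive solution of (\ref{sseq3sep}) is therefore sufficient: $F^*$ is then recovered from (\ref{sseqsepF}), and $Z_1$ from the second identity in (\ref{sseq1sep}), both manifestly positive once ${\bf Z}$ is. Finally $z^*(x)$ is reconstructed from the first equation of (\ref{sseq3}).

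Next I would set up the parameterised matrix $K_c$ with entries $K_c(i,j)=\kappa_{i,j}/c$ subject to the constraint $c=\sum_{i=1}^n\alpha_i Z_{i+2}$, so that (\ref{sseq3sep}) is precisely the statement that ${\bf Z}$ is a fixed point ${\bf Z}=K_c\,{\bf Z}$ in which $c$ is determined self-consistently by ${\bf Z}$. Since $\kappa_{i,j}>0$ for all $i,j$ by hypothesis, $K_c$ is a strictly positive matrix for every $c>0$, and the Perron--Frobenius theorem guarantees that the spectral radius $r(K_c)$ is a simple eigenvalue with a strictly positive eigenvector. The problem thus reduces to locating a parameter value $c_*$ with $r(K_{c_*})=1$.

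To find such a $c_*$ I would combine the continuity of $c\mapsto r(K_c)$ on $(0,\infty)$ with a two-sided bound. On one side, Gershgorin's Circle Theorem confines the eigenvalues of $K_c$ to discs whose centres $\kappa_{i,i}/c$ and radii $\sum_{j\ne i}\kappa_{i,j}/c$ are all $O(1/c)$, so $r(K_c)\to 0$ as $c\to\infty$. On the other side, taking $c<\min_{i,j}\kappa_{i,j}$ makes every diagonal entry $K_c(i,i)=\kappa_{i,i}/c$ exceed $1$, and since the spectral radius of a nonnegative matrix dominates each of its diagonal entries, $r(K_c)>1$ there. The Intermediate Value Theorem then supplies a $c_*\in(0,\infty)$ with $r(K_{c_*})=1$, and Perron--Frobenius furnishes a strictly positive eigenvector ${\bf Z}^*$ solving ${\bf Z}^*=K_{c_*}{\bf Z}^*$.

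The one step that requires genuine care is reconciling this eigenvector with the self-consistency constraint $c_*=\sum_{i=1}^n\alpha_i Z^*_{i+2}$, since Perron--Frobenius fixes ${\bf Z}^*$ only up to a positive scalar. I would exploit exactly this freedom: the eigenvalue equation is homogeneous, so any positive rescaling of ${\bf Z}^*$ stays an eigenvector for the same $c_*$, while $\sum_{i=1}^n\alpha_i Z^*_{i+2}$ scales linearly and is strictly positive (all $\alpha_i>0$ and all $Z^*_{i+2}>0$). Hence a unique rescaling makes this sum equal to the fixed value $c_*$, turning ${\bf Z}^*$ into a genuine strictly positive solution of (\ref{sseq3sep}). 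Strict positivity of the reconstructed $z^*(x)$ then follows from the first equation of (\ref{sseq3}) together with $a>0$, the positivity of every $Z^*_{i+2}$, and the fact that the assumed strict positivity of $k$ forces $\sum_i Z^*_{i+2}k_i(x)>0$ for every $x$; this completes the construction of the coexistence state.
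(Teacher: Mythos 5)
Your proof is correct and follows essentially the same route as the paper: recasting (\ref{sseq3sep}) as the parameterised eigenvalue problem ${\bf Z}=K_c\,{\bf Z}$, invoking Perron--Frobenius for the positive matrix $K_c$, using Gershgorin's theorem to get $r(K_c)\to 0$ as $c\to\infty$, a lower bound $r(K_c)>1$ for $c<\min_{i,j}\kappa_{i,j}$, the Intermediate Value Theorem to locate $c_*$ with $r(K_{c_*})=1$, and the homogeneity of the eigenvalue equation to rescale the eigenvector so that $c_*=\sum_{i=1}^n\alpha_i Z^*_{i+2}$. Your explicit justifications (spectral radius dominating diagonal entries for the small-$c$ bound, and the reconstruction and strict positivity of $z^*$, $F^*$, $Z_1$) merely fill in details the paper leaves implicit.
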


\noindent Next we rewrite system \eqref{sseq1sep}-\eqref{sseq2sep} as follows:
\begin{align}
& (1+\beta Z_1)\frac{\mu}{K}-Z_2=0, \label{sseq4sep} \\
& (1+\beta Z_1)\sum_{i=1}^n\bar{\alpha}_i Z_{i+2}-F^*Z_1=0,\quad (1+\beta Z_1)\sum_{i+1}^n\gamma_iZ_{i+2}-F^*Z_2=0,\label{sseq5sep} \\
& (1+\beta Z_1)\sum_{i=1}^n\kappa_{1,i}Z_{i+2}-F^*Z_3=0\,,\cdots,\,(1+\beta Z_1)\sum_{i+1}^n\kappa_{n,i}Z_{i+2}-F^*Z_{n+2}=0,\label{sseq6sep}
\end{align}
where
\begin{equation*}
\bar{\alpha}_i=\int_{x_1}^{x_2}\frac{k_i(x)}{a(x)}\,d x,\quad \gamma_i=\int_{x_1}^{x_2} k_i(x)\,d x,\quad i=1,\,\cdots,\, n.
\end{equation*}
Equations (\ref{sseq4sep})-(\ref{sseq6sep}) can be recast in the more economic form:
\begin{equation}
A(\beta,Z_1,Z_2,F^*,{\bf Z})={\bf 0}\,\,\in\mathbb{R}^{n+3},\label{Aopeq}
\end{equation}
where $A$ is well-defined (via the left hand-sides of equations (\ref{sseq4sep})-(\ref{sseq6sep})) on $\mathbb{R}^{n+4}$ and continuously differentiable. We would like to apply the Implicit Function Theorem for $A$
to show that if the equation $A(0,Z_1,Z_2,F^*,{\bf Z})={\bf 0}$ has a strictly positive solution then it also has a strictly positive solution for some small positive values of $\beta$. To this end we compute the Jacobian of $A$ at $(0,Z_1,Z_2,F^*,{\bf Z})$:
\begin{equation}
J=
\begin{pmatrix}
0 & -1 & 0 & 0 & \cdots & 0 \\
-F^* & 0 & -Z_1 & \bar{\alpha}_1 & \cdots & \bar{\alpha}_n  \\
0 & -F^* & -Z_2 & \gamma_1 & \cdots & \gamma_n \\
0 & 0 & -Z_3 & \kappa_{1,1} & \cdots & \kappa_{1,n} \\
\cdots & \cdots & \cdots & \cdots & \cdots & \cdots \\
0 & 0 & -Z_{n+2} & \kappa_{n,1} & \cdots & \kappa_{n,n}
\end{pmatrix}.
\end{equation}
Hence if the determinant of the Jacobian matrix evaluated at the a strictly positive stationary state is not zero then
by the Implicit Function Theorem a strictly positive stationary state also exists for small enough values of $\beta$.
Note that the value of the determinant of $J$ depends on the particular finite rank approximation of the kernel $k$.

\subsection{The general case}
We briefly discuss here how the steady state problem can be formulated in case of a general (i.e., non-separable) kernel $k$.
This case is challenging from the mathematical point of view since the integral operator describing the recruitment process is of infinite rank, in general. For a positive stationary state $(z^*(x),F^*)$ we define
\begin{equation}
\kappa^*(x)=\int_{x_1}^{x_2}k(x,y)z^*(y)\,dy,\,\, x\in[x_1,x_2],\quad Z^*=\int_{x_1}^{x_2}z^*(x)\,dx.
\end{equation}
With this notation the (positive) steady state problem can be formulated at least for sufficiently small values of $\beta$ (e.g.  for $\beta=0$) as follows
\begin{align}
\kappa^*(x)& =\frac{1+\beta Z^*}{F^*}\int_{x_1}^{x_2}\frac{k(x,y)}{a(y)}\kappa^*(y)\,dy,\quad x\in [x_1,x_2], \label{eqop1} \\
F^*& =\frac{K}{\mu}\int_{x_1}^{x_2}\kappa^*(x)\,dx,\label{eqop2} \\
Z^*& =\frac{\int_{x_1}^{x_2}\frac{\kappa^*(x)}{a(x)}\,dx}{F^*-\beta\int_{x_1}^{x_2}\frac{\kappa^*(x)}{a(x)}\,dx}. \label{eqop3}
\end{align}
Problem \eqref{eqop1}-\eqref{eqop3} can be considered as an eigenvalue problem for a bounded operator $\mathcal{O}\,:\,\mathcal{X}\times\mathbb{R}^2\to\mathcal{X}\times\mathbb{R}^2$, where $\mathcal{O}$ is defined via the right hand side of equations \eqref{eqop1}-\eqref{eqop3}, and $\mathcal{X}$ is an appropriately choosen Banach space, for example $L^1(x_1,x_2)$. More precisely, if $\mathcal{O}$ has eigenvalue $1$ with a corresponding strictly positive eigenvector $(\kappa^*,F^*,Z^*)^t$ then the system admits a positive steady state with
\begin{equation}\label{ssformgeneral}
z^*(x)=\frac{1+\beta Z^*}{F^*}\frac{\kappa^*(x)}{a(x)},\quad x\in [x_1,x_2].
\end{equation}
Note that for $\beta=0$ the operator $\mathcal{O}$ is positive,  and the eigenvalue problem for $\mathcal{O}$ can be analysed using similar arguments as in Section 2.2 above for infinite dimensional problems see e.g. \cite{AF_2013,KR_1964}.
In particular, for the special case of $\beta=0$ and $a\equiv A$ we have the following result.
\begin{theorem}\label{ssgeneral}
In the case of $\beta=0$ and $a\equiv A$ model \eqref{eq:Zoo}-\eqref{eq:Fish} admits a unique positive stationary state.
\end{theorem}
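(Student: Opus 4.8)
The plan is to reduce the steady-state problem \eqref{eqop1}--\eqref{eqop3}, specialised to $\beta=0$ and $a\equiv A$, to a linear eigenvalue problem for a single positive integral operator, and then to invoke the Krein--Rutman theorem. With $\beta=0$ and $a\equiv A$, equation \eqref{eqop1} becomes
\begin{equation*}
A\,F^*\,\kappa^*(x)=\int_{x_1}^{x_2}k(x,y)\kappa^*(y)\,dy\eqdef (T\kappa^*)(x),\qquad x\in[x_1,x_2],
\end{equation*}
so that a positive stationary state forces $\kappa^*$ to be a strictly positive eigenfunction of the integral operator $T$ on $\X=L^1(x_1,x_2)$ corresponding to the eigenvalue $A\,F^*$. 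First I would record the relevant properties of $T$: it is a positive operator whose kernel $k$ is strictly positive and bounded, so (under the standing regularity assumptions on $k$) $T$ is compact, and strict positivity of $k$ renders $T$ strongly positive, in particular irreducible.

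Second, I would apply the Krein--Rutman theorem \cite{KR_1964}: the spectral radius $r(T)$ is strictly positive, is an algebraically simple eigenvalue of $T$, and is the \emph{only} eigenvalue admitting a positive eigenfunction, the latter being strictly positive and unique up to a positive scalar multiple. Consequently, any admissible $\kappa^*>0$ must satisfy $A\,F^*=r(T)$, which determines $F^*=r(T)/A$ uniquely and positively and pins down $\kappa^*$ up to one positive constant. This step is where all the uniqueness comes from: no other eigenvalue of $T$ can carry a positive eigenfunction, so there is no freedom left in $F^*$.

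Third, I would fix the remaining scalar via the normalisation \eqref{eqop2}. Writing $\kappa^*=s\,\phi$, where $\phi>0$ is the Krein--Rutman eigenfunction and $s>0$, equation \eqref{eqop2} reads $r(T)/A=(K/\mu)\,s\int_{x_1}^{x_2}\phi(x)\,dx$; since $\int_{x_1}^{x_2}\phi>0$, this yields a unique $s>0$. Finally, with $\beta=0$ equation \eqref{eqop3} gives $Z^*=\tfrac{1}{A F^*}\int_{x_1}^{x_2}\kappa^*(x)\,dx>0$, and \eqref{ssformgeneral} delivers the unique strictly positive profile $z^*(x)=\kappa^*(x)/(A F^*)$. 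Tracing the construction backwards shows every admissible $(z^*,F^*)$ arises this way, so existence and uniqueness follow simultaneously.

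I expect the main obstacle to be the verification of the hypotheses of Krein--Rutman, specifically compactness and irreducibility of $T$ on the chosen Banach space: for a merely bounded measurable kernel, compactness on $L^1$ is not automatic, so I would either impose enough continuity on $k$, work in $L^2(x_1,x_2)$, or exploit that $T$ maps $L^1$ into $L^\infty$ together with a two-space compactness argument; strict positivity of $k$ then supplies the irreducibility needed for simplicity of $r(T)$ and uniqueness of the positive eigenfunction. As an alternative that mirrors Section 2.2 and sidesteps an appeal to simplicity, one can introduce the family $T_c=\tfrac{1}{Ac}\,T$ for $c>0$, observe that $r(T_c)=r(T)/(Ac)$ is continuous in $c$, tends to $0$ as $c\to\infty$ and to $\infty$ as $c\to0^+$, and apply the Intermediate Value Theorem to locate the unique $c_*=F^*$ with $r(T_{c_*})=1$.
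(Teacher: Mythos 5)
Your proposal is correct and follows essentially the same route as the paper's Appendix: the paper introduces the parametrised family $\mathcal{O}_F\,z=\frac{1}{FA}\int_{x_1}^{x_2}k(\cdot,y)z(y)\,dy$ on $L^1(x_1,x_2)$, establishes compactness and irreducibility, invokes Sch\"afer's Theorem 6.6 together with Marek's result (the same Krein--Rutman package you use) to conclude that the spectral radius is the only eigenvalue with a positive eigenfunction, and obtains the unique $F^*$ from $r(\mathcal{O}_{F^*})=1$, which is precisely your identity $AF^*=r(T)$, followed by the same normalisation (the paper's $Z^*=\mu/(KA)$ is equivalent to your use of \eqref{eqop2}). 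Even the compactness concern you flag is treated in the paper only by citing the Fr\'echet--Kolmogorov theorem, so there is no substantive difference between the two arguments.
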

The proof of Theorem \ref{ssgeneral} is included in Appendix A. It is worth pointing out that, unlike in the previous sections, we are able to prove uniqueness of the coexistence steady state. To prove the existence of a positive steady state in the most general case, i.e. without the assumptions $\beta=0$ and $a\equiv A$,
we could apply results from the forthcoming paper \cite{CF_2014}, but due to the technical difficulties involved this is outside the scope of the present paper.

%%%%%%%%%%%%%%%%%%%%%%%%%%%%%%%%%%%%%%%%%%%%%%%%%%%%%%%%%%%%%%%%%%%%%%%%%%%%%%%%%%%%%%%%%%%%%%%%%%%%%%%%%%%%%%%%%%%%%%%%%%%%%%%%%%%%%%%%%%%%%%%%%%%%%%%%%%%%%%%%%%%%%%%%%%%%%%%%%%%%%%%%%%%%%%%%%%%%%%%%%%%%%%%

\section{Stability Analysis}

In this section we shall address the local stability of the coexistence stationary state of the model. To this end we linearise the model around the steady state, and analyse the arising eigenvalue problems.
Our model is semi-linear, and so we may invoke Theorem 5.1.1 and Theorem 5.1.3 from \cite{Henry_1981} to justify that the stability results obtained in the section are valid. We linearise equations (\ref{eq:Zoo})-(\ref{eq:Fish}) in the vicinity of the stationary state $(F^*,z^*)$ and consider perturbations $w(x,t)=z(x,t)-z^*(x)$ and $G(t)=F(t)-F^*$ of this state.
We obtain
\begin{align}
\frac{\partial}{\partial t}w(x,t)&=\int_{x_1}^{x_2}k(x,y)w(y,t) dy \nonumber \\
& - \left(\frac{a(x)z^*(x)}{1+\beta Z^*}G(t)+\frac{F^*a(x)}{1+\beta Z^*}w(x,t)-\frac{F^*a(x)z^*(x)\beta}{(1+\beta Z^*)^2}W(t)  \right)\label{eq:Stability_general1} \\
\frac{\ d}{\ d t}G(t)&=-\mu G(t) \nonumber \\
&+K\int_{x_1}^{x_2}\left(\frac{a(x)z^*(x)}{1+\beta Z^*}G(t)+\frac{F^*a(x)}{1+\beta Z^*}w(x,t)-\frac{F^*a(x)z^*(x)\beta}{(1+\beta Z^*)^2}W(t)\right) d x,\label{eq:Stability_general2}
\end{align}
where $Z^*=\int_{x_1}^{x_2}z^*(x)\,\ d x$ and $W(t)=\int_{x_1}^{x_2}w(x,t)\,\ d x$.

We note that the linear problem above is governed by an analytic semigroup, hence the spectrum may contain
only eigenvalues of finite multiplicity. To determine the possible eigenvalues $\lambda$ we look for solutions
of the linearised equations in the standard form: $w(x,t)=\exp(\lambda t)w(x)$, and $G(t)=\exp(\lambda t)G$. This ansatz leads to the following eigenvalue problem (with the assumption of $k>0$):
\begin{align}
0&=w(x)\left(\lambda+\frac{F^*a(x)}{1+\beta Z^*}\right)-\int_{x_1}^{x_2}k(x,y)w(y)d y-W\frac{F^*\beta a(x)z^*(x)}{(1+\beta  Z^*)^2}+G\frac{a(x)z^*(x)}{1+\beta Z^*}, \label{eq:Stability_1} \\
0&=-K\int_{x_1}^{x_2}\frac{F^*a(x)}{1+\beta Z^*}w(x) dx+W\frac{F^*\beta\mu}{1+\beta Z^*}+G\lambda.\label{eq:Stability_2}
\end{align}

 As we can see the eigenvalue problem (\ref{eq:Stability_1})-(\ref{eq:Stability_2}) is rather complicated, in general, since it contains integral equations. Nevertheless in the following subsections we discuss some interesting special cases when the eigenvalue problem becomes tractable, and we can deduce analytical stability or instability results.

\subsection{Constant kernel $k(x,y)\equiv C$.}
The eigenvalue problem for the case of a constant kernel $k(x,y)\equiv C$ reads:
\begin{eqnarray}
0&= & w(x)\left(\lambda+\frac{F^*a(x)}{1+\beta Z^*}\right)+W\left(-C-\frac{F^*\beta a(x)z^*(x)}{(1+\beta Z^*)^2}\right)+G\frac{a(x)z^*(x)}{1+\beta Z^*}, \label{evalue1_1} \\
0&= & -K\int_{x_1}^{x_2}\frac{F^*a(x)}{1+\beta Z^*}w(x)dx+W\frac{F^*\beta\mu}{1+\beta Z^*}+G\lambda.\label{evalue1_2}
\end{eqnarray}
We integrate equation (\ref{evalue1_1}), then multiply it by $K$ and add it to equation (\ref{evalue1_2}) to obtain:
\begin{eqnarray}\label{evalue3_1}
0=W(\lambda K-CK(x_2-x_1))+G(\lambda+\mu).
\end{eqnarray}
Since any non-trivial eigenvector is determined up to a constant multiplier, we may assume that the eigenvector $(w,G)^T$ is such that $W=1$. Note that since we only want show the existence of a positive eigenvalue with a corresponding non-trivial eigenvector (hence we are not characterising the whole point spectrum) we do not need to worry about possible eigenvalues with eigenvectors for which $W=0$. With $W=1$ from (\ref{evalue3_1}) we obtain (for $\lambda\ne-\mu$):
\begin{eqnarray}
\label{Gvalue}
G=\frac{CK(x_2-x_1)-\lambda K}{\lambda+\mu}.
\end{eqnarray}
Using (\ref{Gvalue}) we obtain from (\ref{evalue1_1}) $\left(\text{for}\,\,\lambda\ne -\frac{F^*a(x)}{1+\beta Z^*}\right)$
\begin{eqnarray}
\label{Main}
w(x)=\frac{C+\frac{F^*\beta a(x)z^*(x)}{(1+\beta Z^*)^2}+\frac{\lambda K-CK(x_2-x_1)}{\lambda+\mu}\frac{a(x)z^*(x)}{1+\beta Z^*}}{\lambda+\frac{F^*a(x)}{1+\beta Z^*}}.
\end{eqnarray}

The equation for the eigenvalues can be derived by integrating (\ref{Main}) over $[x_1,x_2]$ which should give unity (according to our choice of $W$):
\begin{eqnarray}
\label{Main1}
\int_{x_1}^{x_2} w(x) dx=\int_{x_1}^{x_2} \frac{C+\frac{F^*\beta a(x)z^*(x)}{(1+\beta Z^*)^2}+\frac{\lambda K-CK(x_2-x_1)}{\lambda+\mu}\frac{a(x)z^*(x)}{1+\beta Z^*}}{\lambda+\frac{F^*a(x)}{1+\beta Z^*}}dx=W=1.
\end{eqnarray}
We can simplify the expression above by taking into account the explicit formulae for the stationary densities $(z^*,F^*)$ in the case of a constant kernel $k$. Without losing generality we can assume that $k=C=\frac{R}{x_2-x_1}=\frac{R}{h}$. In this case, $R$ has the meaning of the growth rate of the whole prey population (for a constant kernel the growth rate does not depend on the particular cohort). Using (\ref{scalar1})-(\ref{scalar4}), we have the stationary densities of species:

\begin{eqnarray}
Z^* &=& \frac{\mu \int_{x_1}^{x_2}\frac{dx}{a(x)}}{Kh-\mu \beta \int_{x_1}^{x_2}\frac{dx}{a(x)}}, \label{biomassZ} \\
z^*(x) &=& \frac{\mu}{\left(Kh-\mu \beta \int_{x_1}^{x_2}\frac{dx}{a(x)}\right)a(x)}, \label{densityz} \\
F^* &=& \frac{K R \int_{x_1}^{x_2}\frac{dx}{a(x)}}{Kh-\mu \beta \int_{x_1}^{x_2}\frac{dx}{a(x)}}.\label{densF}
\end{eqnarray}
We substitute the stationary values (\ref{biomassZ})-(\ref{densF}) into \eqref{Main1}, and after some simplification we obtain:
\begin{eqnarray}
\label{Main2}
\int_{x_1}^{x_2} \frac{\frac{R}{h}+\beta\mu\frac{R}{h^2 K} I_1+\frac{\lambda-R}{\lambda+\mu}\frac{\mu}{hK}}{\lambda+a(x)\frac{R}{h}
I_1}dx &=&1,
\end{eqnarray}
where $I_1=\int_{x_1}^{x_2}\frac{dx}{a(x)}$. We summarize our finding in the following theorem.
 \begin{theorem}\label{k-const}
The positive stationary state $(z^*,F^*)$ of (\ref{eq:Zoo})-(\ref{eq:Fish}) with a constant kernel $k\equiv R/h$ is asymptotically stable if all of the solutions of (\ref{Main2}) have negative real parts, and it is unstable if there exists at least one solution of \eqref{Main2} with positive real part.
\end{theorem}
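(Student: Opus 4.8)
The plan is to read Theorem~\ref{k-const} as an instance of the principle of linearised stability. Since the right-hand sides of \eqref{eq:Stability_general1}-\eqref{eq:Stability_general2} (with $k\equiv C=R/h$) define a \emph{bounded} linear operator $\mathcal{L}$ on $\mathcal{X}\times\mathbb{R}$, it generates an analytic semigroup and, as noted after \eqref{eq:Stability_general2}, stability is decided by the location of the spectrum of $\mathcal{L}$: a spectral bound strictly below $0$ gives exponential asymptotic stability (Theorem~5.1.1 of \cite{Henry_1981}), whereas a single spectral point with positive real part gives instability (Theorem~5.1.3 of \cite{Henry_1981}). The task is therefore to show that the spectrum of $\mathcal{L}$ relevant to stability is governed exactly by the roots of \eqref{Main2}.

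First I would isolate the essential spectrum. Writing $p(x)=\frac{F^*a(x)}{1+\beta Z^*}$, the operator $\mathcal{L}$ is the sum of the multiplication operator $w\mapsto -p(x)w$ and a finite-rank (hence compact) part carrying the constant-kernel integral term $CW$ together with the couplings through $W=\int_{x_1}^{x_2}w$ and $G$. By Weyl's theorem the essential spectrum of $\mathcal{L}$ coincides with that of the multiplication operator, i.e. with the range of $-p$; since $a(x)\ge\inf a>0$ we have $p(x)\ge\delta>0$, so the essential spectrum lies in $(-\infty,-\delta]$ and never threatens stability. The remaining spectrum consists of isolated eigenvalues of finite multiplicity, which I would split according to whether $W\neq0$ or $W=0$. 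For $W\neq0$ the normalisation $W=1$, formula \eqref{Gvalue} for $G$ and formula \eqref{Main} for $w$ reduce the self-consistency requirement $\int_{x_1}^{x_2}w=1$ to precisely \eqref{Main2}; conversely any root $\lambda$ of \eqref{Main2} with $\mathrm{Re}\,\lambda>-\delta$ makes $\lambda+p(x)$ bounded away from $0$, so \eqref{Main} produces a genuine $L^1$-eigenvector. Thus the eigenvalues with $W\neq0$ are exactly the roots of \eqref{Main2}, and the values excluded in the derivation are harmless: $\lambda=-\mu$ forces $W=0$ via \eqref{evalue3_1}, and each $\lambda=-p(x)$ already sits in the negative essential spectrum.

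The step I expect to be the main obstacle is the branch $W=0$, which \eqref{Main2} does not see and which must be shown not to produce dangerous eigenvalues. Setting $W=0$ in \eqref{evalue1_1}-\eqref{evalue1_2}: if $G=0$ then $w$ can only be supported where $\lambda=-p(x)$, giving essential-spectrum values; if $G\neq0$ I would normalise $G=1$, solve $w(x)=-\phi(x)/(\lambda+p(x))$ from \eqref{evalue1_1} with $\phi(x)=\frac{a(x)z^*(x)}{1+\beta Z^*}$, and insert this into \eqref{evalue1_2} to obtain $\lambda=-K\int_{x_1}^{x_2}\frac{p(x)\phi(x)}{\lambda+p(x)}\,dx$. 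Using $\frac{p}{\lambda+p}=1-\frac{\lambda}{\lambda+p}$ together with the constraint $W=\int_{x_1}^{x_2}w=-\int_{x_1}^{x_2}\frac{\phi}{\lambda+p}\,dx=0$, the cross term drops out and the relation collapses to $\lambda=-K\int_{x_1}^{x_2}\phi(x)\,dx<0$. Hence every $W=0$ eigenvalue is real and negative and cannot destabilise the steady state.

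It remains to assemble the two directions, the only delicate point being that the spectral bound be attained and strictly negative rather than merely nonpositive. In the unstable direction a root of \eqref{Main2} with positive real part is an eigenvalue of $\mathcal{L}$ with positive real part, so instability follows immediately. In the stable direction the left-hand side of \eqref{Main2} is analytic on $\{\mathrm{Re}\,\lambda>-\delta\}$ apart from the single pole at $\lambda=-\mu$ and tends to $0$ as $|\lambda|\to\infty$; hence it equals $1$ at only finitely many points of $\{\mathrm{Re}\,\lambda\ge-\delta/2\}$. If all roots of \eqref{Main2} have negative real part, these finitely many eigenvalues have a strictly negative maximal real part, while all other spectrum (the remaining roots, the $W=0$ eigenvalues and the essential spectrum) lies below $-\delta/2$; the spectral bound of $\mathcal{L}$ is therefore strictly negative and Theorem~5.1.1 of \cite{Henry_1981} yields asymptotic stability.
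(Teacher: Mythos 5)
Your proposal is correct, and its core is the same computation the paper performs: normalise $W=1$, solve \eqref{Gvalue} and \eqref{Main}, and reduce the self-consistency requirement $\int_{x_1}^{x_2}w\,dx=W$ to the characteristic equation \eqref{Main2}. However, your write-up is genuinely more complete than the paper's own argument, which justifies this reduction only half-way. The paper dismisses eigenvectors with $W=0$ on the grounds that ``we only want to show the existence of a positive eigenvalue'' --- a remark that covers the instability direction of Theorem \ref{k-const} but silently leaves the stability direction unproven; and it asserts that, the semigroup being analytic, the spectrum ``may contain only eigenvalues of finite multiplicity'', which is not justified (and is in fact false here: the multiplication part $w\mapsto -p(x)w$ of the linearisation contributes a continuum of essential spectrum whenever $a$ is non-constant). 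Your three supplements close exactly these gaps: the Weyl-type compact-perturbation argument locating the essential spectrum in $(-\infty,-\delta]$; the explicit treatment of the $W=0,\,G\neq 0$ branch, where the constraint $\int_{x_1}^{x_2}\frac{\phi(x)}{\lambda+p(x)}\,dx=0$ kills the cross term and yields $\lambda=-K\int_{x_1}^{x_2}\phi(x)\,dx<0$ (in fact this value equals $-\mu$ by the steady-state relation \eqref{scalar4}, consistent with your observation that $\lambda=-\mu$ forces $W=0$); and the finiteness argument showing the relevant roots of \eqref{Main2} in a right half-plane are finitely many, so that the spectral bound is attained and strictly negative before Theorem 5.1.1 of \cite{Henry_1981} is invoked. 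What the paper's version buys is brevity and a fully rigorous instability statement; what yours buys is an actual proof of the asymptotic-stability half of the theorem as stated.
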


By using a particular parametrisation of $a(x)$ the characteristic equation \eqref{Main2} becomes analytically tractable and one can deduce sufficient conditions for the stability of the stationary state. A particularly important case, however, is when the saturation in the predation rate $\beta$ is small ($\beta\ll1$). In this case we can set $\beta=0$,  and the characteristic  equation (\ref{Main2}) takes the following simple form:
\begin{eqnarray}
\label{Main3}
\int_{x_1}^{x_2} \left(\frac{\frac{R}{h}+\frac{\lambda-R}{\lambda+\mu}\frac{\mu}{hK}}{\lambda+a(x)\frac{R}{h}
I_1}\right)dx &=&1.
\end{eqnarray}
Utilising this equation, we can analytically explore stability properties of the stationary state for an arbitrary function $a(x)$ in the case when the trait interval is small, i.e. when $h=x_2-x_1\ll1$. We can use a first order approximation of $a(x)$ given by $a(x)=a_0+a_1(x-x_1)$. We substitute this approximation of $a(x)$ into (\ref{Main3}), and after integration we obtain:
\begin{eqnarray}
\label{linear_a}
\left( \frac{R}{h}+\frac{(\lambda-R)}{\lambda+\mu}\frac{\mu}{h K}  \right)  \ln\left[\frac{a_1R I_1 x_2+\lambda h+R (a_0-a_1 x_1)I_1}{\lambda h+a_0 R I_1}\right] \frac{h}{a_1 R I_1}&=&1.
\end{eqnarray}
We compute the Taylor expansion of the left hand side of (\ref{linear_a}) up to the second order of $h$ and multiply it by $(\lambda+\mu)(\lambda+R)^3$ to obtain a fourth order polynomial function in terms of $\lambda$. We can do it since we look for only $\lambda$ with $\Re(\lambda)>0$
\begin{eqnarray}
\label{linear_a2}
\lambda^4+A_3\lambda^3+A_2 \lambda^2+A_1\lambda+A_0 &=&0,
\end{eqnarray}
where the polynomial coefficients $A_i$ are given by:
\begin{eqnarray*}
\label{P}
A_0 &=& R^3\mu,\, A_1  = 2R^2\mu,A_2 = \frac{R(12a_0^2+a_1^2h^2)(R+\mu)}{12a_0 ^2},\, A_3 = 2R.
\end{eqnarray*}
To determine stability of the polynomial (\ref{linear_a2}) we can utilise the Routh-Hurwitz stability criterion, see e.g. in  \cite{Rahman2002}:

\begin{eqnarray*}
\label{P1}
A_3 A_2 > A_1, \, A_3 A_2 A_1>A_2^2-A_3^2 A_0, \, A_i>0.
\end{eqnarray*}
Verification of the above conditions show that we have always $A_i>0$ and
\begin{eqnarray*}
\label{P2}
A_3 A_2 - A_1 = \frac{R^2(12a_0^2R+Ra_1^2h^2+a_1^2 h^2\mu)}{6 a_0^2}>0, \\
A_3 A_2 A_1-A_2^2-A_3^2 A_0 = R^4a_1^2h^2 \mu \frac{R+\mu}{3 a_0^2}>0.
\end{eqnarray*}
thus the system is locally stable for small $h$. Interestingly, the actual sign of $a_1$, i.e. the gradient of the dependence of vulnerability to predation on the life history trait $x$, does not affect the stability result: the system becomes stabilised for any $a_1 \ne 0$. Note also that keeping only the linear part in terms of $h$ in the expansion of (\ref{linear_a}) would be misleading since some roots $\lambda$ of the polynomial will be purely complex resulting in a degenerated case: small variations of this equation (e.g. considering $1\gg \beta>0$) will perturb those solutions resulting in the appearance of the non-zero real parts.

In the case when the interval $[x_1,x_2]$ is not small and we cannot always assume that $h\ll1$ (but we can still neglect the saturation in the predation rate, i.e. set  $\beta=0$) we need to solve the full equation (\ref{Main3}), which in general is a difficult task since $\lambda$ has both real and imaginary parts. However, we know that for a sufficiently small $h$ the stationary state is always stable and due to the  continuous dependence of the model dynamics on the parameters  (by invoking Theorem 3.4.1 and Theorem 3.4.4 from \cite{Henry_1981}), the stability loss (if any) of the stationary state for a larger value of $h$ should take place via a Hopf bifurcation with $\Re(\lambda)=0$. To find the bifurcation point we substitute $\lambda=i\omega$ into (\ref{Main3}) and separate the real and imaginary part of this equation. After some rearrangement we obtain the following system of equations:
\begin{eqnarray}
\label{Im_1}
 \int_{x_1}^{x_2}\frac{\omega^2}{\omega^2+\left[a(x)\frac{R}{h}I_1\right]^2}dx &=&\frac{h \mu}{R+\mu},\\
 \label{Re_1}
  \int_{x_1}^{x_2}\frac{a(x)\frac{R}{h}I_1}{\omega^2+\left[a(x)\frac{R}{h}I_1\right]^2}dx &=&\frac{h}{R+\mu},
\end{eqnarray}
where the integral $I_1$ is defined earlier.

One can further explore the possibility of solving equations (\ref{Im_1})-(\ref{Re_1}) for a particular function $a(x)$. In our study we have considered the generic linear and parabolic functions given by $a(x)=a_0+a_1(x-x_1)$ and $a(x)=a_0+a_1(x-x_1)+a_1(x-x_1)^2$. Note that for those functions it is easy to calculate the above integrals analytically. In each case we investigated the possibility of solving the system (\ref{Im_1})-(\ref{Re_1}), i.e. to find $\omega$ satisfying both equations. Such $\omega$ will correspond to a Hopf bifurcation point. Our results show that for the given functions there is no solution for any combination of the other model parameters. We do not show here the results for the sake of brevity. Thus, we can conclude that for linear and parabolic parameterisations of $a(x)$  the stationary state $(z^*,F^*)$ is always locally asymptotically stable for any trait interval $[x_1,x_2]$, whenever it exists. It remains to address whether or not this stability result can be extended to the case of an arbitrary function $a(x)$ and $\beta=0$.

On the other hand, in the case of a sufficiently large saturation in predation (i.e. when we cannot neglect $\beta$) the stationary state $(z^*,F^*)$ is always unstable, at least for small values of $h$. This can be shown directly by expanding equation (\ref{Main2}) into Taylor series and keeping only the linear part with respect to $h$.
\begin{eqnarray}
\label{beta_1}
 \mu R(\mu \beta-a_0 K) +\beta \mu R \lambda-a_0 \lambda^2.
\end{eqnarray}
The eigenvalues of this equation have positive real parts since the coefficients of the polynomial have different signs ($a_0>0$), thus the stationary state is unstable. Since for small $\beta$ the stationary state is stable (at least for small $h$, see above), but it becomes unstable for larger $\beta$, there should be a critical value of $\beta=\beta(h)$, where the stability switch occurs. The function $\beta(h)$ defines a Hopf bifurcation curve, which can be obtained numerically from (\ref{Main2}) by substituting $\lambda=i\omega$. The equation for the bifurcation curve is defined by the following system, which we obtained by equating to zero both the imaginary and the real parts of equation (\ref{Main2}):
\begin{eqnarray}
 & \int_{x_1}^{x_2}\frac{dx}{\omega^2+\left[a(x)\frac{R}{h}I_1\right]^2}dx \hspace{45mm} \nonumber  \\
 =&\frac{\mu h^3 K^2 (R+\mu) }{h^2K^2 \omega^2(R+\mu)^2+R\mu \beta I_1 \left[\mu R \beta I_1 (\mu^2+\omega^2)+2hK\omega^2 (R+\mu) \right]}, \label{Im_2}\\
  & \int_{x_1}^{x_2}\frac{a(x)\frac{R}{h}I_1 dx}{\omega^2+\left[a(x)\frac{R}{h}I_1\right]^2}dx \hspace{45mm} \nonumber \\
  =&\frac{\left[ hK\omega^2(R+\mu)+\mu R \beta I_1 (\mu^2 +\omega^2) \right]}{h^2K^2 \omega^2(R+\mu)^2+R\mu \beta I_1 \left[\mu R \beta I_1 (\mu^2+\omega^2)+2hK\omega^2 (R+\mu) \right]}, \label{Re_2}
\end{eqnarray}
with $\omega$ being a positive real value. By choosing a particular function $a(x)$ and solving (\ref{Im_2})-(\ref{Re_2}) one can construct the Hopf bifurcation curve. Note that formally one need to check the standard requirement of , see e.g.  \cite{Perko}, i.e. that $d \Re(\lambda)/d\alpha \ne 0$ holds, with $\alpha$ being a bifurcation parameter (we can verify this condition numerically). We emphasize that in (\ref{Im_2})-(\ref{Re_2}) we do not use the assumption that $h\ll1$.

As an illustrative example, we constructed a family of Hopf bifurcation curves numerically for a linear vulnerability function $a(x)=a_0+a_1(x-x_1)$ in the $(\beta, a_1)$ plane, which is shown in Figure 1. Note that for the given $a(x)$, the integrals in  (\ref{Im_2})-(\ref{Re_2}) can be easily calculated analytically and the system becomes a system of transcendental equations. For each curve in the figure, the region corresponding to the stable stationary state is located on the left-hand side of the curve. The different curves correspond to different lengths of the  interval $[x_1, x_2]$. One can see that stability loss due to an increase in $\beta$ can be compensated for either by increasing the range of $x$, or by increasing the absolute value of the gradient of $a(x)$. This is in agreement with earlier results in \cite{Morozov2013}, which were obtained via direct simulation of the model equations.
\begin{figure}[ht!]
\centering
\includegraphics[width=12cm]{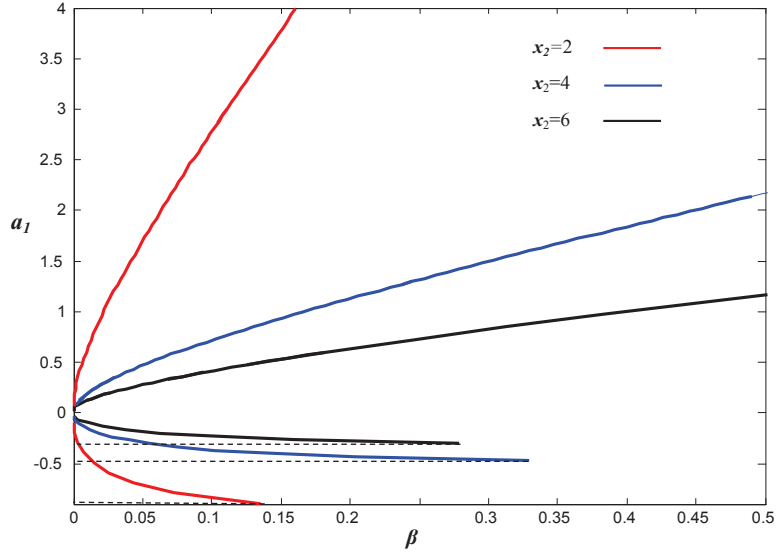}
 \caption{Examples of Hopf bifurcation curves for model (\ref{eq:Zoo})-(\ref{eq:Fish}) with a constant kernel $k\equiv R/h$ and the linear vulnerability $a(x)=a_0+a_1(x-x_1)$ constructed using equations (\ref{Im_2})-(\ref{Re_2}). Different curves correspond to different values of $x_2$. The other model parameters are: $K=1, \mu=0.1, a_0=1, R=2$. For a given $x_2$ the stability region is located on the left-hand side of each curve. The dashed horizontal lines show the lower limits of $a_1$ beyond which $a(x)$ becomes negative.}
\label{fig1}
\end{figure}
Note that equations (\ref{Im_2})-(\ref{Re_2}) only account for the fact that the stationary state loses it stability, however, without providing information about which type of Hopf bifurcation actually occurs: a subcritical or a supercritical one. The knowledge about the type of Hopf bifurcation can be of importance since after a supercritical bifurcation the trajectories remain bounded close to the equilibrium whereas they can become unbounded (no population persistence, see \cite{Morozov2013}) in case of subcritical Hopf bifurcation. Numerical simulations carried out with particular parametrisations of $a(x)$ show that we can have both types of bifurcations (depending on the model parameters). However, revealing a particular type of Hopf bifurcation would require a rather lengthy expressions and should be a matter of a separate study.

\subsection{The kernel is a function of $y$ only: $k(x,y)\equiv k(y)$}

The main technique of stability analysis is similar to the one implemented in the case of a constant kernel. Equations (\ref{eq:Stability_1})-(\ref{eq:Stability_2})  now take the following form:
\begin{align}
0&=w(x)\left(\lambda+\frac{F^*a(x)}{1+\beta Z^*}\right)-\int_{x_1}^{x_2}w(x)k(x) dx-W \frac{F^*\beta a(x)z^*(x)}{(1+\beta Z^*)^2}+G\frac{a(x)z^*(x)}{1+\beta Z^*}, \label{evalue1} \\
0&=-K\int_{x_1}^{x_2}\frac{F^*a(x)}{1+\beta Z^*}w(x)dx+W\frac{F^*\beta\mu}{1+\beta Z^*}+G\lambda.\label{evalue2}
\end{align}
We integrate equation (\ref{evalue1}), multiply it by $K$ and add it to (\ref{evalue2}) to obtain:
\begin{eqnarray}\label{evalue3}
0=W\lambda K-K B (x_2-x_1)+G(\lambda+\mu),
\end{eqnarray}
where $B=\int_{x_1}^{x_2}w(x)k(x) dx$.

We can again assume that $W=1$ (as before) and we do not need to worry about possible eigenvalues with eigenvectors for which $W=0$. We obtain for $\lambda\ne-\mu$:
\begin{eqnarray}
\label{Gvalue_1}
G=\frac{B K(x_2-x_1)-\lambda K}{\lambda+\mu}.
\end{eqnarray}
Using (\ref{Gvalue_1}) we obtain from (\ref{evalue1}) $\left(\text{for}\,\,\lambda\ne -\frac{F^*a(x)}{1+\beta Z^*}\right)$
\begin{eqnarray}
\label{Main_4}
w(x)=\frac{B+\frac{F^*\beta a(x)z^*(x)}{(1+\beta Z^*)^2}+\frac{\lambda K-B K(x_2-x_1)}{\lambda+\mu}\frac{a(x)z^*(x)}{1+\beta Z^*}}{\lambda+\frac{F^*a(x)}{1+\beta Z^*}}.
\end{eqnarray}
We multiply equation (\ref{Main_4}) by $k(x)$ and integrate it over $[x_1,x_2]$, this should give us the value of $B$:
\begin{eqnarray}
\label{Main_5}
\int_{x_1}^{x_2} w(x)k(x) dx=\int_{x_1}^{x_2} \frac{B+\frac{F^*\beta a(x)z^*(x)}{(1+\beta Z^*)^2}+\frac{\lambda K-BK(x_2-x_1)}{\lambda+\mu}\frac{a(x)z^*(x)}{1+\beta Z^*}}{\lambda+\frac{F^*a(x)}{1+\beta Z^*}}dx=B.
\end{eqnarray}
We can simplify equation (\ref{Main_4}) by using the explicit expressions for the stationary densities $(z^*,F^*)$ given by equations (\ref{scalar1})-(\ref{scalar4}). The stationary values of $Z^*$ and $z^*$ will be the same as in (\ref{biomassZ}) and (\ref{densityz}). For $F^*$ we have:
\begin{eqnarray}
F^* &=& \frac{K \int_{x_1}^{x_2}\frac{k(x) dx}{a(x)}dx}{Kh-\mu \beta \int_{x_1}^{x_2}\frac{dx}{a(x)}dx}.\label{densityF}
\end{eqnarray}
We substitute these stationary values into (\ref{Main_5}), and after some simplification we obtain:
\begin{eqnarray}
\int_{x_1}^{x_2} \frac{B+\frac{\beta \mu}{K h} I_2+\frac{\lambda K-B K(x_2-x_1)}{\lambda+\mu}\frac{\mu}{h K}}{\lambda+a(x) I_2} \, k(x) dx&=&B,
\end{eqnarray}
where $I_2=\displaystyle\int_{x_1}^{x_2}\frac{k(x)}{a(x)}dx$. Then we obtain the solution $B$ as follows:
\begin{eqnarray}
B=- I_3 \mu \, \frac{I_2 \beta(\lambda+\mu)+\lambda K}{(I_3\lambda-\lambda-\mu)Kh},
\end{eqnarray}
where $I_3=\displaystyle\int_{x_1}^{x_2}\frac{k(x)}{\lambda+a(x) I_2}dx$.
Finally, we substitute $B$ into the equation for $w(x)$ and integrate over $[x_1, x_2]$, this should give us 1 since we set $W=1$. After some simplification, we have:
 \begin{eqnarray}
 \label{Main_7}
-\mu \, \frac{I_2 \beta(\lambda+\mu)+\lambda K}{\left(\lambda \displaystyle\int_{x_1}^{x_2}\frac{k(x)}{\lambda+a(x) I_2}dx-\lambda-\mu\right)Kh} \int_{x_1}^{x_2} \frac{1}{\lambda+a(x) I_2} \, dx&=&1.
\end{eqnarray}
We summarize our findings in the following theorem.
\begin{theorem}\label{k(x)}
The positive stationary state $(z^*,F^*)$ of (\ref{eq:Zoo})-(\ref{eq:Fish}) with a kernel $k\equiv k(y)$ is stable if the eigenvalues of characteristic equation (\ref{Main_7}) have negative real parts, and it is unstable if there is at least one  eigenvalue with positive real parts.
\end{theorem}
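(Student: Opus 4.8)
The plan is to read Theorem \ref{k(x)} as an instance of the principle of linearized stability and to reduce the spectral problem for the linearization to the scalar condition \eqref{Main_7}. Since the model is semilinear and, as noted at the start of this section, the linearized flow is governed by an analytic semigroup (Theorems 5.1.1 and 5.1.3 of \cite{Henry_1981}), asymptotic stability of $(z^*,F^*)$ is equivalent to the spectrum of the generator lying in the open left half plane, while a single spectral point with positive real part forces instability. Everything therefore hinges on locating the part of the spectrum with $\Re(\lambda)\ge 0$ and identifying it with the roots of \eqref{Main_7}.

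First I would record the structure of the linearized generator obtained from \eqref{eq:Stability_general1}--\eqref{eq:Stability_general2} with $k(x,y)\equiv k(y)$. It is a bounded operator on $L^1(x_1,x_2)\times\mathbb{R}$ consisting of the multiplication operator $w\mapsto -\frac{F^*a(x)}{1+\beta Z^*}\,w(x)$ together with finitely many rank-one couplings (for $k(x,y)\equiv k(y)$ the recruitment term $B=\int_{x_1}^{x_2}k(x)w(x)\,dx$ is a single scalar, as are the feedbacks through $W$ and $G$). Because $a$ and $F^*$ are strictly positive, the multiplication part has spectrum contained in $(-\infty,0)$ and bounded away from $0$, and finite-rank terms leave the essential spectrum unchanged; hence the essential spectrum is strictly negative and any spectral value with $\Re(\lambda)\ge 0$ must be an isolated eigenvalue of finite multiplicity. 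This is the structural fact that makes the characteristic-equation analysis legitimate.

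Next I would carry out the reduction to \eqref{Main_7}, which is essentially the computation already displayed. For $\lambda$ off the essential spectrum the factor $\lambda+\frac{F^*a(x)}{1+\beta Z^*}$ is nowhere zero, so \eqref{evalue1} can be solved pointwise for $w(x)$ as an explicit affine function of the three scalars $B$, $W$ and $G$. Substituting this into the defining identities $W=\int_{x_1}^{x_2}w$, $B=\int_{x_1}^{x_2}kw$ and into \eqref{evalue2} produces a homogeneous $3\times 3$ linear system in $(B,W,G)$, whose nontrivial solvability is the vanishing of its determinant. Performing the elimination exactly as in the passage from \eqref{evalue3} through \eqref{Gvalue_1} and \eqref{Main_4}, and inserting the explicit steady-state values \eqref{biomassZ} and \eqref{densityF}, collapses this determinant condition to \eqref{Main_7}. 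Thus a $\lambda$ with $\Re(\lambda)\ge 0$ is an eigenvalue precisely when it solves \eqref{Main_7}, and the theorem then follows by combining this correspondence with the linearized stability principle.

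The step I expect to be delicate is the \emph{completeness} of this correspondence, i.e. making sure that \eqref{Main_7} sees all of the spectrum in $\{\Re(\lambda)\ge 0\}$. The normalization $W=1$ used in the derivation silently discards eigenvectors with $W=0$ together with the exceptional value $\lambda=-\mu$; the latter is harmless since $\Re(-\mu)<0$, but the former needs an argument. I would treat it by setting $W=0$ in \eqref{evalue3}, which ties $G$ to $B$ and reduces the eigenvector to a single scalar multiple; feeding this back into the two remaining consistency conditions $\int_{x_1}^{x_2}w=0$ and $B=\int_{x_1}^{x_2}kw$ yields two independent scalar equations for the single unknown $\lambda$, and I would argue, using the strict negativity of the essential spectrum and the positivity of $a$ and $F^*$, that they cannot be met simultaneously for $\Re(\lambda)\ge 0$. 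This closes the gap for the stability (sufficiency) direction; for the instability direction no such care is needed, since any root of \eqref{Main_7} with positive real part already exhibits one unstable eigenvalue with $W\neq 0$.
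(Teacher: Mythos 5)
Your proposal is correct, and its computational core is exactly the paper's own derivation: normalize $W=1$, solve the integrated equation for $G$ as in \eqref{Gvalue_1}, substitute back into \eqref{evalue1} to get $w(x)$ as in \eqref{Main_4}, close the system through $B=\int_{x_1}^{x_2}k\,w\,dx$ and $\int_{x_1}^{x_2}w\,dx=W$, and insert the steady-state formulae \eqref{biomassZ}, \eqref{densityz}, \eqref{densityF} to arrive at \eqref{Main_7}. Where you genuinely go beyond the paper is on the stability half of the statement. The paper dismisses eigenvectors with $W=0$ on the grounds that it "only wants to show the existence of a positive eigenvalue"; that justification covers the instability direction only, whereas the claim that negativity of all roots of \eqref{Main_7} implies stability requires that \eqref{Main_7} sees \emph{every} spectral point in $\{\Re\lambda\ge 0\}$. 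You correctly identify this as the delicate step, and you also give a cleaner structural reason (multiplication operator plus finite-rank couplings, hence strictly negative essential spectrum preserved under the perturbation) for why only isolated eigenvalues can lie in the closed right half-plane, where the paper merely asserts that the analytic semigroup setting yields a pure point spectrum.

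One caveat: your closing argument for the $W=0$ case --- "two independent scalar equations for a single unknown $\lambda$" --- is a dimension count, not a proof; generic unsolvability is not unsolvability. It can, however, be completed along the lines you indicate, and the positivity of $a$ and $F^*$ is indeed the key. For $k\equiv k(y)$ the steady state satisfies $a(x)z^*(x)\equiv c>0$ (immediate from \eqref{densityz}), so with $W=0$ equation \eqref{evalue1} yields
\begin{equation*}
w(x)=\frac{C(\lambda)}{\lambda+q(x)},\qquad C(\lambda)=B-\frac{Gc}{1+\beta Z^*},\qquad q(x)=\frac{F^*a(x)}{1+\beta Z^*}>0,
\end{equation*}
i.e. $w$ is a single scalar multiple of $(\lambda+q(\cdot))^{-1}$. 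Imposing $0=W=C(\lambda)\int_{x_1}^{x_2}\frac{dx}{\lambda+q(x)}$ and noting that for $\Re\lambda\ge 0$ the integrand has strictly positive real part (so the integral cannot vanish), we get $C(\lambda)=0$, hence $w\equiv 0$, hence $B=0$, and then $G=0$ because $c>0$. So there are no nontrivial eigenvectors with $W=0$ and $\Re\lambda\ge 0$, and the correspondence between the right-half-plane spectrum and the roots of \eqref{Main_7} is complete. With this computation supplied, your proof is complete and, on the stability direction, is tighter than the paper's own treatment.
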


Note that it is possible (using the same method as in the previous section) to prove the stability of the system in the case when $\beta$ and the length of the trait interval $h$ are small. Again, this stability does not depend on the sign of the gradient of $a(x)$. On the contrary, in the case when $\beta$ is not small, but $h$ is small, we can prove that the stationary state is unstable. Thus again, there should exist a Hopf bifurcation curve which separates the regions of stability and instability. For any particular function $a(x)$, construction of such a curve should be based on equation (\ref{Main_7}), where we replace $\lambda$ with $i \omega$.

Finally, we can consider the situation where the vulnerability to predation is constant: $a(x)\equiv A$. In this case the characteristic equation reduces to:
 \begin{eqnarray}
 \label{a-const}
-\mu \, \frac{I_2 \beta(\lambda+\mu)+\lambda K}{(\lambda \frac{A I_2}{\lambda+A I_2}-\lambda-\mu)K(\lambda+A I_2)} \, &=&1,
\end{eqnarray}
which can be re-written as an equivalent equation (we are interested to find only $\lambda$ with $\Re (\lambda)>0$)
 \begin{eqnarray}
 \label{a-const}
\lambda^2 -I_2 \beta \mu \lambda +I_2 \mu (AK-\beta \mu)&=&0.
\end{eqnarray}
It is easy to see that for $\beta>0$ we shall always have eigenvalues with positive real part, thus the equilibrium will be  unstable for any $k\equiv k(y)$.
%%%%%%%%%%%%%%%%%%%%%%%%%%%%%%%%%%%%%%%%%%%%%%%%%%%%%%%%%%%%%%%%%%%%%%%%%%%%%%%%%%%%%%%%%%%%%%%%%%%%%%%%%%%%%%%%%%%%%%%%%%%%%%%%%%%%%%%%%%%%%%%%%%%%%%%%%%%%%%%%%%%%%%%%%%%%%%%%%%%%%%%%%%%%%%%%%%%%%%%%%%%%%%%

\subsection{Separable kernel of a finite rank}
We can extend the previous techniques of stability analysis to deal with a more general case, where the kernel is given by (\ref{finite_rank}). We assume that we have already calculated the values of $Z_i$ and $F^*$ (see section 2.2 for details). This can be done, for instance, by using numerical methods. Equations (\ref{eq:Stability_1})-(\ref{eq:Stability_2}) now read:
\begin{align}
0&=w(x)\left(\lambda+\frac{F^*a(x)}{1+\beta Z^*}\right)-k_i(x)\sum_{i=1}^n \int_{x_1}^{x_2}w(x)\bar{k}_i(x) dx-W \frac{F^*\beta a(x)z^*(x)}{(1+\beta Z^*)^2}+G\frac{a(x)z^*(x)}{1+\beta Z^*}, \label{evalue_f1} \\
0&=-K\int_{x_1}^{x_2}\frac{F^*a(x)}{1+\beta Z^*}w(x)dx+W\frac{F^*\beta\mu}{1+\beta Z^*}+G\lambda.\label{evalue_f2}
\end{align}
We introduce the notation $B_i=\displaystyle\int_{x_1}^{x_2}w(x)\bar{k}_i(x) dx$, multiply equation (\ref{evalue_f1}) by $K$, integrate this equation and add it to equation (\ref{evalue_f2}). We obtain:
\begin{eqnarray}\label{evalue3}
0=W\lambda K-K\sum_{i=1}^n B_i I_{0i}  +G(\lambda+\mu),
\end{eqnarray}
where $I_{0i}=\displaystyle\int_{x_1}^{x_2} k_i(x) dx$. We assume again that $W=1$, and express $G$ as follows:
\begin{eqnarray}
\label{Gvalue_f}
G=\frac{K\displaystyle\sum_{i=1}^n B_i I_{0i}-\lambda K}{\lambda+\mu}.
\end{eqnarray}
Using equation (\ref{Gvalue_f}) we obtain from (\ref{evalue_f1}) with $\frac{a(x)z(x)}{(1+\beta Z_1)}$ given by (\ref{sseq1sep})
\begin{eqnarray}
\label{Main_f}
w(x)=\frac{\displaystyle\sum_{i=1}^n B_i k_i(x) +\left[ \frac{\beta }{(1+\beta Z^*)}+\frac{\lambda-\displaystyle\sum_{i=1}^n B_i I_{0i}}{\lambda+\mu}\frac{K}{F^*}\right] \displaystyle\sum_{i=1}^n Z_{i+2}k_i(x)}{\lambda+\frac{F^*a(x)}{1+\beta Z^*}}.
\end{eqnarray}
To find the coefficients $B_i$ we multiply equation (\ref{Main_f}) by $\bar{k}_i(x)$ and integrate it over $[x_1, x_2]$ for $i=1,\cdots,n$. We obtain the following system of linear equations:
\begin{eqnarray}
\label{Main_ff}
B_j=\sum_{i=1}^n B_i I_{ij}+ \left[ \frac{\beta }{(1+\beta Z^*)}+\frac{\lambda-\sum_{i=1}^n B_i I_{0i}}{\lambda+\mu}\frac{K}{F^*} \right]\sum_{i=1}^n Z_{i+2} I_{ij},
\end{eqnarray}
where
\begin{equation*}
I_{ij}=\int_{x_1}^{x_2}\frac{k_i(x)\bar{k}_j(x)}{\lambda+\frac{F^*a(x)}{1+\beta Z^*}}dx.
\end{equation*}
After re-arrangement, system (\ref{Main_ff}) reads:
\begin{align}
-\frac{\beta }{(1+\beta Z^*)}-\frac{\lambda K}{\lambda+\mu} \sum_{i=1}^n Z_{i+2} I_{ij} &=& \sum_{i=1}^n B_i \left[ I_{ij}-\frac{I_{0i}}{\lambda+\mu}\frac{K}{F^*} \sum_{i=1}^n Z_{i+2} I_{ij} -\delta_{ij}\right],\label{Main_fff}
\end{align}
with $\delta_{ii}=1$ and $\delta_{ij}=0$, for $i\ne j $. We summarize our findings in the following theorem.

\begin{theorem}
Let us assume that system (\ref{sseq1sep})-(\ref{sseq2sep}) has a unique solution $(Z_i,F^*)$ and the determinant of the linear system (\ref{Main_fff}) is non-zero. Then the characteristic equation of the linearised system (\ref{eq:Stability_general1})- (\ref{eq:Stability_general2}) for a separable kernel (\ref{finite_rank}) is given by
\begin{align}
& \displaystyle\int_{x_1}^{x_2}\left[\left(\displaystyle\sum_{i=1}^n B_i k_i(x) +\left[ \frac{\beta }{(1+\beta Z^*)}+(\lambda+\mu)\left(\lambda-\displaystyle\sum_{i=1}^n B_i I_{0i}\frac{K}{F^*}\right)\right] \sum_{i=1}^n Z_{i+2}k_i(x)\right)\right. \nonumber \\
& \left.\hspace{45mm}\left(\lambda+\frac{F^*a(x)}{1+\beta Z^*}\right)^{-1}\right]dx=1,\label{Main_fm}
\end{align}
where $B_i=B_i(\lambda)$ are the solutions of equations (\ref{Main_fff}).
\end{theorem}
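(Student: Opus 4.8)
The plan is to derive the stated characteristic equation by direct elimination, mirroring the strategy already used for the constant kernel and for $k(x,y)\equiv k(y)$, but now carrying the $n$ scalar unknowns $B_i=\int_{x_1}^{x_2}w(x)\bar{k}_i(x)\,dx$ in place of the single quantity $B$. I would start from the eigenvalue problem \eqref{evalue_f1}-\eqref{evalue_f2}, normalising the eigenvector by $W=1$; this is permissible because we only wish to locate eigenvalues admitting a nontrivial eigenvector, and eigenvectors with $W=0$ need not be tracked, exactly as argued in the constant-kernel case. The first standing hypothesis, that \eqref{sseq1sep}-\eqref{sseq2sep} has a unique solution $(Z_i,F^*)$, supplies the steady-state data around which the linearisation is taken.

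First I would integrate \eqref{evalue_f1} over $[x_1,x_2]$, multiply by $K$, and add the result to \eqref{evalue_f2}. The bulk predation integrals cancel against the recruitment term, leaving the scalar relation $0=W\lambda K-K\sum_{i=1}^n B_iI_{0i}+G(\lambda+\mu)$, from which $G$ is expressed explicitly in terms of $\lambda$ and the $B_i$ as in \eqref{Gvalue_f}. Substituting this $G$ back into \eqref{evalue_f1}, and replacing the factor $a(x)z^*(x)/(1+\beta Z^*)$ by its steady-state value dictated by the first identity of \eqref{sseq3}, I would solve for $w(x)$ and obtain the closed formula \eqref{Main_f}, valid for $\lambda\neq -F^*a(x)/(1+\beta Z^*)$. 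Since we only seek $\lambda$ with $\Re(\lambda)>0$ and $a>0$, the denominator never vanishes and this division is legitimate throughout.

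Next I would close the system. Multiplying \eqref{Main_f} by $\bar{k}_j(x)$ and integrating over $[x_1,x_2]$ for each $j=1,\dots,n$ produces the self-consistency relations \eqref{Main_ff}, which after rearrangement become the $n\times n$ linear system \eqref{Main_fff} in the unknowns $B_1,\dots,B_n$. Here the second hypothesis enters decisively: because the determinant of \eqref{Main_fff} is assumed non-zero, the system has a unique solution $B_i=B_i(\lambda)$, so the auxiliary unknowns are determined as explicit rational functions of $\lambda$. Finally I would impose the one remaining normalisation, namely that the $w(x)$ given by \eqref{Main_f} actually satisfies $\int_{x_1}^{x_2}w(x)\,dx=W=1$; feeding in $B_i=B_i(\lambda)$ turns this consistency requirement into the scalar equation \eqref{Main_fm}. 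A value $\lambda$ with $\Re(\lambda)>0$ is then an eigenvalue of the linearised operator precisely when \eqref{Main_fm} holds, which is the assertion of the theorem.

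The computation itself is routine; the main obstacle is the linear-algebra bookkeeping in the coupling between the $n$ equations \eqref{Main_fff} and the closing equation \eqref{Main_fm}, together with verifying that the elimination order, first $G$, then $w(x)$, then the $B_i$, then $W$, is internally consistent. The purpose of the non-degeneracy hypothesis is precisely to decouple the determination of the $B_i(\lambda)$ from the final characteristic equation: without it one would have to treat separately the isolated values of $\lambda$ at which \eqref{Main_fff} degenerates as additional candidate eigenvalues, which would obscure the clean scalar criterion \eqref{Main_fm}.
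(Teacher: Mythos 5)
Your proposal is correct and follows essentially the same route as the paper's own derivation, which obtains the result by exactly this elimination: normalising $W=1$, expressing $G$ via (\ref{Gvalue_f}), substituting the steady-state identity from (\ref{sseq1sep}) to get the closed form (\ref{Main_f}) for $w(x)$, closing the system by multiplying by $\bar{k}_j$ and integrating to reach (\ref{Main_fff}), and finally imposing $\int_{x_1}^{x_2}w(x)\,dx=W=1$ to produce (\ref{Main_fm}). Your remarks on the role of the two hypotheses (the steady-state data and the non-vanishing determinant that makes $B_i(\lambda)$ well defined) match the paper's use of them.
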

In particular, the Hopf bifurcation curve can be found by substituting $\lambda= i \omega$ into equation (\ref{Main_fm}) and solving the system of two equations for the imaginary and the real parts in a similar way as in (\ref{Im_2})-(\ref{Re_2}).

Although Theorem 3.3 allows a complete characterisation of the eigenvalues, its practical implementation might require some advanced computational methods which can be as much complicated as solving the original differential equations. However, in a particular case, when the kernel is separable, i.e.  $k=k_1(x)\bar{k}_1(y)$, the characteristic equation becomes much simpler. We can use the explicit expressions for the stationary densities of species given in section 2.1. After substituting the densities into (\ref{Main_fff}) we obtain the scalar equation for $B$, which we substitute into (\ref{Main_fm}) to obtain
 \begin{eqnarray}
 \label{Main_71}
-\mu \, \frac{I_3 \beta(\lambda+\mu)+\lambda K}{\left(\lambda \displaystyle\int_{x_1}^{x_2}\frac{k_1(x)\bar{k}_1(y)}{\lambda+a(x) I_2} d x-\lambda-\mu\right)K h I_1} \displaystyle\int_{x_1}^{x_2} \frac{1}{\lambda+a(x) I_2} \, dx&=&1,
\end{eqnarray}
where
\begin{equation*}
I_1=\displaystyle\int_{x_1}^{x_2} k_1(x)dx,\quad I_2=\displaystyle\int_{x_1}^{x_2} \frac{k_1(x)}{a(x)}dx, \quad I_3=\displaystyle\int_{x_1}^{x_2} \frac{k_1(x)\bar{k}_1(y)}{a(x)}dx.
\end{equation*}
Interestingly, in the particular case $a(x)\equiv A$, we can easily prove based on (\ref{Main_71}) that the stationary state is always unstable. We also managed to prove instability for the case $a(x)\equiv A$ when $k=k_1(x)\bar{k}_1(y)+k_2(x)\bar{k}_2(y)$, (the result is not shown here due to a rather lengthy proof); however, for the general case with an arbitrary number of terms in the kernel $k$, proving (in)stability remains an open problem. Note that in the next section, we partially address this question by providing a sufficient condition for instability for the case of $a(x)\equiv A$.

\subsection{Constant $a(x)\equiv A$}
In the previous section we partially addressed the question of stability of the stationary state in the case $a(x)\equiv A$ for a simple separable kernel and found that the state is always unstable. In this subsection, we shall consider a more general situation for an arbitrary positive kernel $k=k(x,y)$. The case of $a(x)\equiv A$ is of particular practical interest since it is important to know whether or not stability would be possible in the case of structuring in the prey population in terms of growth rate only and without the selectivity in predation according to the life trait $x$.

In case of $a(x)\equiv A$ the eigenvalue problem \eqref{eq:Stability_1}-\eqref{eq:Stability_2} reads:
\begin{align}
0&=G\lambda+W\frac{KF^*A}{1+\beta Z^*}\left(\frac{\beta Z^*}{1+\beta Z^*}-1\right), \label{eigva1} \\
0&=G\frac{Az^*(x)}{1+\beta Z^*}-W\frac{F^*Az^*(x)\beta}{(1+\beta Z^*)^2}+w(x)\left(\lambda+\frac{F^*A}{1+\beta Z^*}\right)-\int_{x_1}^{x_2}k(x,y)w(y)\, dy.\label{eigva2}
\end{align}
We also note that the stationary biomass of prey can be found from
\begin{eqnarray}\label{ssid2}
\mu=KA\frac{Z^*}{1+\beta Z^*}.
\end{eqnarray}

Utilising the above identity  we rewrite the eigenvalue problem (\ref{eigva1})-(\ref{eigva2}) as follows:
\begin{align}
0&=G\lambda+W\frac{F^*}{Z^*}\mu\left(\frac{\mu\beta}{KA}-1\right), \label{eigva3} \\
0&=G\frac{\mu}{K}\frac{z^*(x)}{Z^*}-W\frac{\beta}{A}\frac{\mu^2}{K^2}\frac{F^*}{Z^*}
\frac{z^*(x)}{Z^*}+w(x)\left(\lambda+\frac{\mu}{K}\frac{F^*}{Z^*}\right)-\int_{x_1}^{x_2}k(x,y)w(y)\,dy.\label{eigva4}
\end{align}
Any $\lambda$ for which there is a non-trivial vector $(G,w)$, which together with $\lambda$ solves equations \eqref{eigva3}-\eqref{eigva4}  is an eigenvalue. In general it is very difficult (if not impossible) to completely characterize the spectrum, mainly because of the inhomogeneous integral equation \eqref{eigva4}.
However, to show instability we only need to prove the existence of a solution $\lambda>0$ with a corresponding non-trivial eigenvector $(G,w)$. In what follows we establish a condition which guarantees the existence of a positive eigenvalue. To this end we utilise some well-known results from the spectral theory of positive and compact operators. For basic definitions and results not given here we refer the reader to \cite{Sch_1974}.

\begin{theorem}
\label{Theorem}
If the following inequality holds:
\begin{eqnarray}\label{instabcond}
\frac{AK}{\beta}-\mu+\frac{\mu}{K}\frac{F^*}{Z^*}\le \inf_{x\in [x_1,x_2]}\left\{\int_{x_1}^{x_2}k(x,y)\,dy\right\},
\end{eqnarray}
then the stationary state $(z^*,F^*)$ is unstable.
\end{theorem}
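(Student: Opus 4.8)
The plan is to eliminate $G$, reduce the coupled eigenvalue problem \eqref{eigva3}--\eqref{eigva4} to a single nonlinear-in-$\lambda$ eigenvalue problem for a positive operator, and then exhibit a positive eigenvalue by a continuity/sweep argument. First I would record the algebraic consequence of the steady-state identity \eqref{ssid2}: since it gives $KA-\mu\beta=\mu/Z^*$, the left-hand side of the hypothesis \eqref{instabcond} equals $\lambda_0+\tfrac{\mu}{K}\tfrac{F^*}{Z^*}$, where $\lambda_0:=\tfrac{\mu}{\beta Z^*}$. For $\lambda>0$ the scalar equation \eqref{eigva3} lets me solve $G=\tfrac{W}{\lambda}\tfrac{F^*\mu}{Z^*}\bigl(1-\tfrac{\mu\beta}{KA}\bigr)$; substituting into \eqref{eigva4} and using that $a\equiv A$ (so the $G$- and $W$-terms are both multiples of $z^*$) collapses the problem to the single integral equation
\begin{equation*}
\Bigl(\lambda+\tfrac{\mu}{K}\tfrac{F^*}{Z^*}\Bigr)w(x)-\int_{x_1}^{x_2}k(x,y)w(y)\,dy=\rho(\lambda)\,z^*(x)\int_{x_1}^{x_2}w(y)\,dy,
\end{equation*}
where $\rho(\lambda)=\tfrac{F^*\mu^2}{AK^2(Z^*)^2}\bigl(\beta-\tfrac{\mu}{\lambda Z^*}\bigr)$.

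The crucial observation is that $\rho(\lambda)\ge 0$ precisely when $\lambda\ge\lambda_0$. So I would work on $X=L^1(x_1,x_2)$ and, for $\lambda\ge\lambda_0$, introduce the positive compact operator $\mathcal{A}_\lambda=T+\rho(\lambda)R$, where $(Tw)(x)=\int_{x_1}^{x_2}k(x,y)w(y)\,dy$ and $(Rw)(x)=z^*(x)\int_{x_1}^{x_2}w(y)\,dy$ is a positive rank-one operator. With $c:=\tfrac{\mu}{K}\tfrac{F^*}{Z^*}$, the reduced equation says exactly that $\lambda+c$ is an eigenvalue of $\mathcal{A}_\lambda$. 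Conversely, any eigenvector $w$ of $\mathcal{A}_\lambda$, together with $W:=\int_{x_1}^{x_2}w$ and $G$ defined by the formula above, solves \eqref{eigva3}--\eqref{eigva4}; a strictly positive eigenvector gives $W>0$, hence a nontrivial pair $(w,G)$ of the linearisation.

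The core of the argument is then an intermediate-value sweep in $\lambda$. I would set $g(\lambda)=r(\mathcal{A}_\lambda)-(\lambda+c)$ for $\lambda\ge\lambda_0$. Since $k>0$ is bounded, $T$ is compact and strongly positive, so Krein--Rutman gives that $r(\mathcal{A}_\lambda)>0$ is an eigenvalue with a strictly positive eigenvector. Because $\rho(\lambda)\ge 0$ makes $\mathcal{A}_\lambda\ge T\ge 0$, monotonicity of the spectral radius yields $r(\mathcal{A}_\lambda)\ge r(T)$; and the Collatz--Wielandt lower bound, tested on the constant function $\mathbbm{1}$ (for which $(T\mathbbm{1})(x)=\int_{x_1}^{x_2}k(x,y)\,dy\ge m$ with $m:=\inf_x\int_{x_1}^{x_2}k(x,y)\,dy$), gives $r(T)\ge m$. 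At the left endpoint $\rho(\lambda_0)=0$, so $\mathcal{A}_{\lambda_0}=T$ and $g(\lambda_0)=r(T)-(\lambda_0+c)\ge m-(\lambda_0+c)\ge 0$ by hypothesis \eqref{instabcond}. On the other hand $\rho(\lambda)$ stays bounded as $\lambda\to\infty$, so $\mathcal{A}_\lambda$ converges in norm and $r(\mathcal{A}_\lambda)$ remains bounded while $\lambda+c\to\infty$; hence $g(\lambda)\to-\infty$. By continuity of $g$ there is a $\lambda^*\ge\lambda_0>0$ with $g(\lambda^*)=0$, i.e. $r(\mathcal{A}_{\lambda^*})=\lambda^*+c$, and its positive eigenvector reconstructs a nontrivial eigenvector at the eigenvalue $\lambda^*>0$, proving instability.

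The main obstacle I anticipate is the analytic justification of the sweep rather than the algebra: establishing that $\lambda\mapsto r(\mathcal{A}_\lambda)$ is genuinely continuous (which I would obtain from norm-continuity of $\lambda\mapsto\mathcal{A}_\lambda$ together with the fact that, by irreducibility, $r(\mathcal{A}_\lambda)$ is an isolated simple eigenvalue depending continuously on the operator), and verifying the Krein--Rutman hypotheses (positivity, compactness and irreducibility of $\mathcal{A}_\lambda$ on $L^1$) uniformly for $\lambda\ge\lambda_0$. A secondary point to treat carefully is the borderline case $g(\lambda_0)=0$, in which $\lambda^*=\lambda_0$ already furnishes the positive eigenvalue.
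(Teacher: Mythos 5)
Your proposal is correct, and its skeleton is the same as the paper's: use \eqref{instabcond} to force the spectral radius of a family of positive compact integral operators above the relevant threshold at one end of a one\--parameter sweep, show it drops below at the other end, and conclude via continuity of the spectral radius (Kato) plus the Krein--Rutman/Schaefer theorem that in between there is a strictly positive eigenvector yielding a positive eigenvalue $\lambda$ of \eqref{eigva3}--\eqref{eigva4}. In fact your sweep is the paper's sweep after a change of variables: the paper parametrises by $(G,W)$ in polar coordinates, and its half-line $W=G\frac{A}{\beta}\frac{K}{\mu}\frac{Z^*}{F^*}$ is exactly your left endpoint $\lambda=\lambda_0$ (on that half-line the denominator in \eqref{opdef} equals $\lambda_0+c$, i.e.\ the left-hand side of \eqref{instabcond}), its limit $\phi\to\pi/2$ is your $\lambda\to\infty$, and your Collatz--Wielandt bound $r(T)\ge m$ tested on $\mathbbm{1}$ is the paper's estimate $\|\mathcal{O}^n\|\ge 1$. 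The genuine difference, and the merit of your version, is the treatment of the term proportional to $z^*(x)$. The paper keeps $(G,W)$ as free scalars, so its operators $\mathcal{O}_{(G,W)}$ are \emph{affine} in $w$; the eigenvector it produces at the Intermediate Value Theorem point satisfies only the linear part of the fixed-point equation, while off the half-line the inhomogeneity is strictly positive, so equation \eqref{eigva5} is not actually solved there (by the Fredholm alternative it is in fact unsolvable when the linear part has spectral radius one and the inhomogeneity is positive), and the consistency requirement $W=\int_{x_1}^{x_2}w$ is left implicit. By eliminating $G$ through \eqref{eigva3} and absorbing $W=\int_{x_1}^{x_2}w$ into the rank-one operator $R$, your family $\mathcal{A}_\lambda=T+\rho(\lambda)R$ is genuinely linear and positive for $\lambda\ge\lambda_0$, and consistency holds by construction, at the modest price of a $\lambda$-nonlinear eigenvalue problem $r(\mathcal{A}_\lambda)=\lambda+c$; your remark on the borderline case $g(\lambda_0)=0$ corresponds precisely to the one situation where the paper's construction needs no repair, namely when the critical point lies on the half-line itself. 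The two technical points you flag (compactness of $T$ on $L^1$ for a bounded kernel, and continuity of $\lambda\mapsto r(\mathcal{A}_\lambda)$ through the isolated simple leading eigenvalue of an irreducible compact positive operator) are resolved at exactly the level of rigour the paper itself employs, via the Fr\'echet--Kolmogorov theorem and Kato's perturbation theory, so they are not gaps relative to the paper's standard.
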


\begin{proof}
Assuming that $G\ne 0$ from equation \eqref{eigva3} we obtain:
\begin{eqnarray*}
\lambda=\frac{W}{G}\frac{F^*}{Z^*}\mu\left(1-\frac{\mu\beta}{KA}\right).
\end{eqnarray*}
Note that $1>\frac{\mu\beta}{KA}$ holds. With this, from equation \eqref{eigva4} we have:
\begin{eqnarray}\label{eigva5}
w(x)=\frac{\displaystyle\int_{x_1}^{x_2}k(x,y)w(y)\,dy+\frac{\mu}{K}\frac{z^*(x)}{Z^*}\left(\frac{\beta}{A}\frac{\mu}{K}\frac{F^*}{Z^*} W-G\right)}{\frac{W}{G}\frac{F^*}{Z^*}\mu\left(1-\frac{\mu\beta}{KA}\right)+\frac{\mu}{K}\frac{F^*}{Z^*}}, \quad x\in [x_1,x_2].
\end{eqnarray}
Hence for $G,W>0$ we introduce a parametrised family of bounded linear operators:
\begin{eqnarray}\label{opdef}
\mathcal{O}_{(G,W)} w=\frac{\displaystyle\int_{x_1}^{x_2}k(\cdot,y)w(y)\,d y+\frac{\mu}{K}\frac{z^*(\cdot)}{Z^*}\left(\frac{\beta}{A}\frac{\mu}{K}\frac{F^*}{Z^*} W-G\right)}{\frac{W}{G}\frac{F^*}{Z^*}\mu\left(1-\frac{\mu\beta}{KA}\right)+\frac{\mu}{K}\frac{F^*}{Z^*}},
\end{eqnarray}
with domain D$\left(\mathcal{O}_{(G,W)}\right)=L^1(x_1,x_2)$. We show that for some $G,W>0$ this (bounded) linear operator has eigenvalue $1$, namely its spectral radius, with a corresponding strictly positive eigenvector $w$.
To this end first we notice that in the region of the parameter plane determined by $W\ge G\frac{A}{\beta}\frac{K}{\mu}\frac{Z^*}{F^*}$, the operators $\mathcal{O}_{(G,W)}$ are positive and irreducible, since $k$ is strictly positive (see e.g. \cite{Sch_1974}). Moreover, since we assumed that the kernel $k$ is bounded above, i.e. $k(x,y)<\overline{k}$ for all $x,y\in [x_1,x_2]$, it is shown that the operators $\mathcal{O}_{(G,W)}$ are compact. We also note that along the half-line $W=G\frac{A}{\beta}\frac{K}{\mu}\frac{Z^*}{F^*},\, G>0$ the spectral radius $r\left(\mathcal{O}_{(G,W)}\right)$ is constant.
Next we are going to establish that at any point of the half-line $W=G\frac{A}{\beta}\frac{K}{\mu}\frac{Z^*}{F^*},\, G>0$ the spectral radius of the operator $\mathcal{O}_{(G,W)}$ is greater or equal than one. To this end recall (see e.g. \cite{Kato_1995}) that for any bounded linear operator $\mathcal{L}$ we have the following characterisation of the spectral radius:
\begin{eqnarray*}
r(\mathcal{L})=\lim_{n\to\infty}||\mathcal{L}^n||^{\frac{1}{n}}=\inf_{n\ge 1}||\mathcal{L}^n||^{\frac{1}{n}}.
\end{eqnarray*}
We have:
\begin{align}
\left|\left|\mathcal{O}_{\left(G,G\frac{A}{\beta}\frac{K}{\mu}\frac{Z^*}{F^*}\right)}\right|\right| & =\sup_{||w||_1=1}
\int_{x_1}^{x_2}\int_{x_1}^{x_2}\frac{k(x,y)}{\frac{AK}{\beta}-\mu+\frac{\mu}{K}\frac{F^*}{Z^*}}w(y)\,dy\,d x \nonumber \\
& \ge \frac{1}{x_2-x_1}\frac{1}{\frac{AK}{\beta}-\mu+\frac{\mu}{K}\frac{F^*}{Z^*}}\int_{x_1}^{x_2}\int_{x_1}^{x_2}k(x,y)\,dy\,dx \nonumber \\
& \ge \frac{1}{x_2-x_1}\frac{1}{\frac{AK}{\beta}-\mu+\frac{\mu}{K}\frac{F^*}{Z^*}}\int_{x_1}^{x_2} \inf_{x\in[x_1,x_2]}\left\{\int_{x_1}^{x_2}k(x,y)\,dy\right\}\,d x \nonumber \\
& \ge 1. \nonumber
\end{align}
Similarly we obtain:
\begin{align}
\left|\left|\mathcal{O}^2_{\left(G,G\frac{A}{\beta}\frac{K}{\mu}\frac{Z^*}{F^*}\right)}\right|\right| & =\sup_{||w||_1=1}
\int_{x_1}^{x_2}\int_{x_1}^{x_2}\frac{k(z,x)}{\frac{AK}{\beta}-\mu+\frac{\mu}{K}\frac{F^*}{Z^*}} \int_{x_1}^{x_2}\frac{k(x,y)}{\frac{AK}{\beta}-\mu+\frac{\mu}{K}\frac{F^*}{Z^*}}w(y)\,d y\,d x\,d z \nonumber \\
& \ge \frac{1}{x_2-x_1}\frac{1}{\left(\frac{AK}{\beta}-\mu+\frac{\mu}{K}\frac{F^*}{Z^*}\right)^2}\int_{x_1}^{x_2}\int_{x_1}^{x_2}k(z,x)\int_{x_1}^{x_2}k(x,y)\,dy\,dx \,d z \nonumber \\
& \ge \frac{1}{x_2-x_1}\frac{1}{\frac{AK}{\beta}-\mu+\frac{\mu}{K}\frac{F^*}{Z^*}}\int_{x_1}^{x_2} \int_{x_1}^{x_2}k(z,x)\,d x\,d z \nonumber \\
& \ge 1. \nonumber
\end{align}
In particular, for any $n\in\mathbb{N}$ as above we obtain $$\left|\left|\mathcal{O}^n_{\left(G,G\frac{A}{\beta}\frac{K}{\mu}\frac{Z^*}{F^*}\right)}\right|\right|\ge 1,$$  hence 
$$r\left(\mathcal{O}_{\left(G,G\frac{A}{\beta}\frac{K}{\mu}\frac{Z^*}{F^*}\right)}\right)\ge 1.$$
To establish the existence of a point in the region $W\ge G\frac{A}{\beta}\frac{K}{\mu}\frac{Z^*}{F^*}$ (in the positive quadrant of the parameter plane $(G,W)$), where the spectral radius equals $1$ it is convenient to re-parametrise the family of operators using polar coordinates. We write:
\begin{align*}
W=r\sin(\phi),\,G=r\cos(\phi),\quad \text{for}\quad r>0,\,\phi\in \left(0,\frac{\pi}{2}\right).
\end{align*}
With this new parametrisation we have:
\begin{equation}
\mathcal{O}_{(r,\phi)} w=\frac{\displaystyle\int_{x_1}^{x_2}k(\cdot,y)w(y)\,dy+\frac{\mu}{K}\frac{z^*(\cdot)}{Z^*}\left(\frac{\beta}{A}\frac{\mu}{K}\frac{F^*}{Z^*} r\sin(\phi)-r\cos(\phi)\right)}{\tan(\phi)\frac{F^*}{Z^*}\mu\left(1-\frac{\mu\beta}{KA}\right)+\frac{\mu}{K}\frac{F^*}{Z^*}}.
\end{equation}
Note that the condition $W\ge G\frac{A}{\beta}\frac{K}{\mu}\frac{Z^*}{F^*}$ corresponds to the condition $\tan(\phi)>\frac{A}{\beta}\frac{K}{\mu}\frac{Z^*}{F^*}$.
Also note that for every $r>0$ we have :
\begin{equation}
\lim_{\phi\to\frac{\pi}{2}}\left(\frac{\frac{\beta}{A}\frac{\mu}{K}\frac{F^*}{Z^*} r\sin(\phi)-r\cos(\phi)}{\tan(\phi)\frac{F^*}{Z^*}\mu\left(1-\frac{\mu\beta}{KA}\right)+\frac{\mu}{K}\frac{F^*}{Z^*}}\right)=0.
\end{equation}
Since we assumed that $k$ is bounded above by some $\overline{k}$ it follows that the spectral radius of the integral operator, which maps $w$ to $\int_{x_1}^{x_2}k(\cdot,y)w(y)\,d y$ is bounded above by $\overline{k}$, hence for every $r>0$
\begin{equation}
\lim_{\phi\to\frac{\pi}{2}}r\left(\mathcal{O}_{(r,\phi)}\right)=0.
\end{equation}
We note that we have shown above that condition \eqref{instabcond} implies that
\begin{equation*}
r\left(\mathcal{O}_{\left(r,\tan^{-1}\left(\frac{A}{\beta}\frac{K}{\mu}\frac{Z^*}{F^*}\right)\right)}\right)
=r\left(\mathcal{O}_{\left(G,G\frac{A}{\beta}\frac{K}{\mu}\frac{Z^*}{F^*}\right)}\right)\ge 1,
\end{equation*}
for any $r>0$. It follows from the Intermediate Value Theorem and the continuity of the spectral radius function (see e.g. \cite[Ch.IV.3.5]{Kato_1995}) that there is a point $(G_0,W_0)$ in the region $G,W>0,\, W\ge G\frac{A}{\beta}\frac{K}{\mu}\frac{Z^*}{F^*}$, where the spectral radius of $\mathcal{O}_{(G_0,W_0)}$ equals one. Note that the spectral radius itself is an eigenvalue. It follows from \cite[Theorem 6.6]{Sch_1974} that there is a strictly positive unique normalised eigenvector $v$  corresponding to the spectral radius $r\left(\mathcal{O}_{(G_0,W_0)}\right)$:
\begin{equation}
\mathcal{O}_{(G_0,W_0)}\, v=r\left(\mathcal{O}_{(G_0,W_0)}\right)\, v=1\cdot v.
\end{equation}
Hence we proved the existence of a solution:
\begin{equation*}
w(x)=W_0\,v(x),\,G_0,\, \lambda=\frac{W_0}{G_0}\frac{F^*}{Z^*}\mu\left(1-\frac{\mu\beta}{KA}\right)>0
\end{equation*}
of the eigenvalue problem \eqref{eigva1}-\eqref{eigva2}, which implies that the stationary state $(z^*,F^*)$ is unstable.
\end{proof}

Thus, the above theorem provides us with the sufficient condition of instability in the case of $a(x)\equiv A$, in general.

%%%%%%%%%%%%%%%%%%%%%%%%%%%%%%%%%%%%%%%%%%%%%%%%%%%%%%%%%%%%%%%%%%%%%%%%%%%%%%%%%%%%%%%%%%%%%%%%%%%%%%%%%%%%%%%%%%%%%%%%%%%%%%%%%%%%%%%%%%%%%%%%%%%%%%%%%%%%%%%%%%%%%%%%%%%%%%%%%%%%%%%%%%%%%%%%%%%%%%%%%%%%%%%

%%%%%%%%%%%%%%%%%%%%%%%%%%%%%%%%%%%%%%%%%%%%%%%%%%%%%%%%%%%%%%%%%%%%%%%%%%%%%%%%%%%%%%%%%%%%%%%%%%%%%%%

\section{Discussion and summary of the results}

Food-web models including rapid evolution are now gaining more and more popularity in the ecological literature in the recognition of the fact the rapid evolution can dra\-ma\-ti\-cally affect population dynamics \cite{Jones_2007, Jone_etal2009, Ellner_2011, Wolf_2012, Fussmann_Gonzalez2013, Tyutyunov2013}. In this paper, we have analytically investigated the recent eco-evolutionary predator-prey model proposed by Morozov \textit{et al.} in \cite{Morozov2013}, which is  given by equations (\ref{eq:Zoo})-(\ref{eq:Fish}). The previous numerical investigation shows a counter-intuitive result that rapid evolution of an organism's life trait combined with predation selectivity could stabilize this otherwise globally unstable system (when the carrying capacity of prey is considered to be infinitely large and the functional response of the predator is assumed to be a destabilizing Holling type II response). Since these important results  were obtained only using numerical simulation and only for some specific parametrization of the inheritance kernel $k(x,y)$ as well as the selectivity function $a(x)$, the natural question was about the generality of conclusions made in \cite{Morozov2013}. Here we  address the above question by investigating the model analytically and for generic model functions.

The main mathematical outcomes of our study are the following:

(i) We found conditions for the existence of the positive stationary states of model (\ref{eq:Zoo})-(\ref{eq:Fish}) for a class of kernels $k(x,y)$  given by (\ref{finite_rank}) and the arbitrary parametrization of the vulnerability to predation $a(x)$ .

(ii) We analytically derived stability conditions for the non-trivial stationary state in the case when the kernel is separable. In particular, when the kernel is constant and there is a perfect genetic mixing in the offspring in the prey population. The obtained characteristic equations allow us to construct the Hopf bifurcation curve without the need for a direct simulation of equations (\ref{eq:Zoo})-(\ref{eq:Fish}). Although, to construct the Hopf bifurcation curve one still need to solve transcendental equations (e.g. (\ref{Im_2})-(\ref{Re_2})), this task becomes substantially faster compared to direct simulations of the integro-differential equations.

(iii) For an arbitrary function $a(x)$, describing selectivity of predation, we analytically proved the stability of the coexistence stationary state in the case when the saturation of the predation $\beta$ as well as the length $h$ of the interval of variation of the life trait $[x_1, x_2]$ are small. Moreover, for simple kernels (e.g. $k=C, k=k(y)$ or $k=\bar{k} (y) k(x)$), by choosing  concrete parametrisations for the vulnerability to predation $a(x)$, we can easily explore the stability properties in the case when the interval $[x_1, x_2]$ is not small. In particular, our investigation shows that for $a(x)=a_0+a_1(x-x_1)$ and $a(x)=a_0+a_1(x-x_1)+a_1(x-x_1)^2$, the stationary state turns out to be always stable for a constant kernel $k$.

(iv) In contrast, for a constant $a(x)\equiv A$, we obtained that stabilisation is never possible in the case of $k=\bar{k} (y) k(x)$ or even in some more complicated cases of the separable kernel. For an arbitrary kernel $k=k(x,y)$ we derived sufficient conditions of instability with $a(x)\equiv A$ (see Theorem \ref{Theorem}).

(v) The framework of stability analysis and stationary states search suggested in this paper is rather generic and can be implemented to other eco-evolutionary models involving integro-differential equations.

From the biological perspective, the  analytical results obtained have clear interpretation. Stabilisation of the otherwise globally unstable predator-prey system is possible as a result of rapid evolution, and the necessary ingredients of stability are: (a) sufficiently large genetic variation within the prey population (for $\beta>0$ the length of the interval $[x_1, x_2]$ should be large enough to guarantee stability, see Figure 1); (b) there should be some selectivity of consumption of prey by its predator with respect to the life trait, i.e. $a=a(x)$, the sign of selectivity has no pronounced influence on stability (see Figure 1, as well as characteristic equation (\ref{linear_a2})); (c) the saturation in predation $\beta$ should not be too large. These requirements for stability are in a perfect agreement with the numerical results of \cite{Morozov2011, Morozov2013}.

We should mention here that our study of model (\ref{eq:Zoo})-(\ref{eq:Fish}) is by no means complete: several key issues are still need to be addressed. In particular, it would be important to prove (or disprove) the stability of the stationary state in the case $a(x)$ is an arbitrary positive function and when the saturation $\beta$ is small (see section 3.2 for details). Derivation of an exact condition for the type of Hopf bifurcation (i.e. subcritical or supercritical)  would be of a great practical use. Verifying the requirement that $a(x)$ should not be a constant to assure stability is an important challenge for future investigation. It would also be necessary to consider another class of demographic kernels $k(x,y)$, which are Gaussian-shaped and close to zero within large portion of the interval of $[x_1, x_2]$ except the area around their maximum. Biologically such kernels model the situation where the reproduction of the cohort with the life trait $x$ would be centred around $x$ with only small probability for offspring to deviate from this value. This represents a biologically relevant case of asexual reproduction. We are planning to address the above issues in future work.

Finally, it would be rather natural to incorporate diffusion into model \eqref{eq:Zoo}-\eqref{eq:Fish}. This would allow modelling of small random variations in the genetic trait in a deterministic fashion, for instance, this can account for random variation in the trait within the life time of an organism \cite{Morozov2013}. Structured population models with diffusion in the state space are attracting interest, see e.g. the recent papers \cite{BP_2007, Bouin_2012, CF_2012, Diekmann_2005,  Hadeler_2010, Lorz_2011, Michel_2013}. Since diffusion has a smoothing effect, in general, it would
be an interesting question to investigate whether the Hopf bifurcation shown here can be sustained in the analogue model with diffusion.
 
\vspace{1cm}
\section{Appendix}

Here we prove Theorem \ref{ssgeneral} stated in Section 2.3. In the special case of $\beta=0$ and $a\equiv A$ the steady state problem reads
\begin{align}
z^*(x) &=\frac{1}{F^*A}\int_{x_1}^{x_2}k(x,y)z^*(y)\, dy, \label{ssgeneq1} \\
\int_{x_1}^{x_2}z^*(x)\, dx=Z^* &=\frac{\mu}{KA}. \label{ssgeneq2}
\end{align}
Hence we define a parametrised family of bounded linear integral operators as follows
\begin{equation}
\mathcal{O}_{F}\,z=\frac{1}{FA}\int_{x_1}^{x_2}k(\cdot,y)z(y)\, dy,\quad F>0,
\end{equation}
with domain
\begin{align*}
D(\mathcal{O}_{F})=L^1(x_1,x_2),\quad F>0.
\end{align*}
Since the kernel $k$ is strictly positive and bounded, the integral operator $\mathcal{O}_{F}$ is irreducible and compact for every $F>0$, see e.g. \cite[Chapter V]{Sch_1974}, and the Fr\'{e}chet-Kolmogorov Theorem e.g. in \cite[Chapter X]{Yosida_1995}. Theorem 6.6 in \cite[Chapter V]{Sch_1974} implies that the spectral radius $r(\mathcal{O}_{F})$ of $\mathcal{O}_{F}$ is an isolated and dominant simple eigenvalue with a corresponding strictly positive eigenfunction. It also follows from the proof of Theorem 2.2 in \cite{Marek_1970} that the spectral radius is the only eigenvalue with a positive eigenvector. Also note that the spectral radius function $r\,: F\to r(\mathcal{O}_{F})$ is strictly monotone decreasing, and it is continuous. Continuity of the spectral radius function is a consequence of Theorem 3.16 in \cite[Chapter IV]{Kato_1995}. It follows that there exists a unique value $F^*$ for which the spectral radius $r(\mathcal{O}_{F^*})=1$ with a corresponding strictly positive eigenvector $\hat{z^*}$. That is we have
\begin{align*}
\mathcal{O}_{F^*}\, \hat{z^*}=1\cdot\hat{z^*}.
\end{align*}
We normalise this eigenvector $\hat{z^*}$, so that the stationary prey population density, which also satisfies equation \eqref{ssgeneq2}, is given by
\begin{align*}
z^*(x)=\frac{\mu}{KA}\frac{\hat{z^*}(x)}{\int_{x_1}^{x_2}\hat{z^*}(x)\, dx},\quad x\in [x_1,x_2].
\end{align*}

%%%%%%%%%%%%
%%%%%%%%%%%%%%%%%

%*******************************************************************
%Acknowledgements
%*******************************************************************

\vspace{10mm}
\noindent{\bf Acknowledgements}
 J. Z. Farkas was partly funded by a University of Stirling research and enterprise support grant. We thank Professors A. Gorban and S. Petrovskii (University of Leicester) for helpful discussions and comments. We also thank the anonymous referee for helpful suggestions and comments.

%\begin{acknowledgement}
% This work was funded by a grant from ...
%\end{acknowledgement}


\begin{thebibliography}{99}

\bibitem{Abrams_1996}
P.A. Abrams, C.J. Walters. \textit{Invulnerable prey and the paradox of enrichment}. Ecology, 77 (1996), 1125--1133.

\bibitem{AF_2013}
A. S. Ackleh and J. Z. Farkas. \textit{On the net reproduction rate of continuous structured populations with distributed states at birth}. Comput. Math. Appl., 66 (2013), 1685--1694.

\bibitem{Allen_2007}
L. J. S. Allen. An introduction to mathematical biology. Pearson Prentice Hall, Upper Saddle River, NJ, 2007.

\bibitem{BP_2007}
G. Barles and B. Perthame. \textit{Concentrations and constrained Hamilton-Jacobi equations arising in adaptive dynamics}.
Contemp. Math., 439 (2007), 57--68.

\bibitem{Bouin_2012}
E. Bouin, V. Calvez, N. Meunier, S. Mirrahimi, B. Perthame, G. Raoul, and R. Vouituriez. \textit{Invasion fronts with variable
motility: phenotype selection, spatial sorting and wave acceleration}. C. R. Math. Acad. Sci. Paris, 350 (2012), 761--766.

\bibitem{CF_2014}
\`A. Calsina and J. Z. Farkas. \textit{Positive steady states of evolution equations with finite dimensional nonlinearities}.
to appear in SIAM J. Math. Anal.

\bibitem{CF_2012}
\`A. Calsina and J. Z. Farkas. \textit{Steady states in a structured epidemic model with Wentzell boundary condition}.
J. Evol. Equ.,  12 (2012), 495--512.

\bibitem{CP_2013}
\`A. Calsina and J. M. Palmada. \textit{Steady states of a selection-mutation model for an age structured population}.
J. Math. Anal. Appl., 400 (2013), 386-395.

\bibitem{Cushing_1998}
J.M. Cushing. An introduction to structured population dynamics. SIAM, Philadelphia, PA, 1998.

\bibitem{Diekmann_2005}
O. Diekmann, P.-E. Jabin, S. Mischler, and B. Perthame. \textit{The dynamics of adaptation: an illuminating example and a Hamilton-Jacobi approach}. Th. Pop. Biol., 67 (2005), 257--271.

\bibitem{Dube_2002}
D. Dube, K., Kim, A. P. Alker, C.D. Harvell. \textit{Size structure and geographic variation in chemical resistance
of sea fan corals Gorgonia ventalina to a fungal pathogen}. Mar. Ecol. Prog. Ser,. 231 (2002), 139--150.

\bibitem{Duffy_2007}
M. A. Duffy, L. Sivars-Becker. \textit{Rapid evolution and ecological host-parasite dynamics}. Ecol. Lett., 10 (2007), 44--53.

\bibitem{Ellner_2011}
S. P. Ellner, M. A. Geber, N. G. Hairston. \textit{Does rapid evolution matter? Measuring the rate of contemporary evolution and its impacts on ecological dynamics}. Ecol. Lett., 14 (2011), 603--614.

\bibitem{FGH_2010}
J. Z. Farkas, D. M. Green, P. Hinow. \textit{Semigroup analysis of structured parasite populations}.
Math. Model. Nat. Phenom., 5 (2010), 94--114.

\bibitem{FH_2007}
J. Z. Farkas, T. Hagen.  \textit{Linear stability and positivity results for a generalized size-structured Daphnia model with inflow}.  Appl. Anal., 86 (2007), 1087--1103.

\bibitem{Fussmann_Gonzalez2013}
G. F. Fussmann, A. Gonzalez. \textit{Evolutionary rescue can maintain an oscillating community undergoing environmental change}. Interface Focus, 3 (2013), 20130036.

\bibitem{Gentleman2003}
W. Gentleman, A. Leising, B. Frost, S. Storm, J. Murray. \textit{Functional responses for zooplankton feeding on multiple resources: a review of assumptions and biological dynamics}. Deep-Sea Res. II Top Stud., Oceanog., 50 (2003), 2847--2875.

\bibitem{Hadeler_2010}
K. P. Hadeler. \textit{Structured populations with diffusion in state space}.
Math. Biosci. Eng., 7 (2010), 37--49.

\bibitem{Hairston_2008}
J. N. G., Hairston, L. De Meester. \textit{Daphnia paleogenetics and environmental change: deconstructing
the evolution of plasticity}. Int. Rev. Hydrobiol., 93 (2008), 578--592.

\bibitem{Johnson_2009}
M. T. J. Johnson, M. Vellend, J. R. Stinchcombe. \textit{Evolution in plant populations as a driver of ecological
changes in arthropod communities}. Phil. Trans. R. Soc. B., 364 (2009), 1593--1605.

\bibitem{Jones_2007}
L.E. Jones,  S.P., Ellner. \textit{Effects of rapid prey evolution on predator-prey cycles}. J. Math. Biol. (2007) 55, 541--573.

\bibitem{Jone_etal2009}
L. E. Jones, L. Becks, S. P. Ellner, N. G. Hairston Jr., T. Yoshida, G. F. Fussmann. \textit{Rapid contemporary evolution and clonal food web dynamics}. Phil. Trans. R. Soc., B 364 (2009), 1579--1591

\bibitem{Henry_1981}
D. Henry. Geometric theory of semilinear parabolic equations. Springer, Berlin-New York, 1981.

\bibitem{Holling1959}
C. S. Holling. \textit{ The components of predation as revealed by a study of small mammal predation on the European pine sawfly}.
Can. Entomol., 91 (1959), 293--320.

\bibitem{Kato_1995}
T. Kato. Perturbation Theory for Linear Operators. Springer, Berlin Heidelberg, 1995.

\bibitem{Lorz_2011}
A. Lorz, S. Mirrahimi, and B. Perthame. \textit{Dirac mass dynamics in multidimensional nonlocal parabolic equations}. Comm. Partial Differential Equations, 36 (2011), 1071-1098.

\bibitem{Kot2001}
M .Kot. Elements of Mathematical Ecology, Cambridge University Press, 2001.

\bibitem{KR_1964}
M. A. Krasnoselskii. Positive solutions of operator equations. P. Noordhoff Ltd., Groningen, 1964.

\bibitem{Marek_1970}
I. Marek. \textit{Frobenius theory for positive operators: Comparison theorems and applications}.
SIAM J. Appl. Math., 19 (1970), 607-628.


\bibitem{Matthews_etal2011}
B. Matthews B, et al. (2011). \textit{Toward an integration of evolutionary biology and ecosystem science}. Ecol. Lett., 14 (2011), 690--701.

\bibitem{Michel_2013}
P. Michel, T. M. Touaoula. \textit{Asymptotic behavior for a class of the renewal nonlinear equation with diffusion}.
Math. Methods Appl. Sci., 36 (2013), 323--335.

\bibitem{Morozov2011}
A. Yu. Morozov. \textit{Incorporating complex foraging of zooplankton in models: role of micro and mesoscale processes in macroscale patterns}. In Dispersal, individual movement and spatial ecology: a mathematical perspective (eds M Lewis, P Maini \& S Petrovskii), pp. 1--10. New York, NY: Springer, 2011.

\bibitem{Morozov_etal_2011}
A. Yu. Morozov, E.G. Arashkevich, A. Nikishina, K Solovyev.
\textit{Nutrient-rich plankton communities stabilized via predator-prey interactions: revisiting the role of vertical heterogeneity}. Math. Med. Biol., 28 (2011), 185--215

\bibitem{Morozov2013}
A. Yu. Morozov, A. F. Pasternak, E. G. Arashkevich. \textit{Revisiting the Role of Individual Variability in Population Persistence and Stability}. PLoS ONE 8(8)(2013), e70576

\bibitem{Oaten_Murdoch_1975}
A. Oaten, W.W. Murdoch. \textit{Functional response and stability in predator-prey systems}. Amer. Nat., 109 (1975), 289--298.

\bibitem{Perko}
L. Perko.  Differential  Equations  and  Dynamical  Systems.  Springer,  New  York, 2001

\bibitem{Petrovskii_2010}
S. V. Petrovskii, A. Y. Morozov. \textit{Dispersal in a statistically structured population: Fat tails revisited}. Amer. Nat., 173 (2010) 278--289

\bibitem{Rahman2002}
Q. I. Rahman, G. Schmeisser. Analytic theory of polynomials. London Mathematical Society Monographs. New Series 26. Oxford: Oxford University Press, 2002.

\bibitem{Reznick_2008}
D. N. Reznick, C. K. Ghalambor, K. Crooks. \textit{Experimental studies of evolution in guppies: a model for
understanding the evolutionary consequences of predator removal in natural communities.} Mol. Ecol. 17 (2008), 97--107.

\bibitem{Rosenzweig1971}
M. L. Rosenzweig. \textit{ Paradox of enrichment: destabilization of exploitation ecosystems in ecological time}.
Science, 171 (1971), 385--387.

\bibitem{RosenzweigMacArthur1963}
M. L. Rosenzweig, R. H. MacArthur. \textit{ Graphical representation and stability conditions of predator-prey interactions}.
Am. Nat., 97 (1963), 209--223.

\bibitem{Sch_1974}
H. H. Sch\"{a}fer. Banach lattices and positive operators. Springer-Verlag, Berlin, 1974.

\bibitem{Thompson1998}
J. N. Thompson. \textit{Rapid evolution as an ecological process}. Trends Ecol. Evol. 13 (1998), 329--332

\bibitem{Tyutyunov2013}
Yu. V. Tyutyunov, O. V. Kovalev, L. I. Titova. \textit{Spatial demogenetic model for studying phenomena observed upon introduction of the ragweed leaf beetle in the South of Russia}. Math. Mod. Nat. Phen., (2013),-- .

\bibitem{Wolf_2012}
M. Wolf, F. J. Weissing. \textit{Animal personalities: consequences for ecology and evolution}. Trends Ecol. Evolut., 8(2012) 452--461.

\bibitem{Yoshida_etal2003}
T. Yoshida, L. E. Jones, S. P. Ellner, G. F. Fussmann, J. Hairston.  \textit{Rapid evolution drives ecological dynamics in a predator-prey system}. Nature 424 (2003), 303--306

\bibitem{Yosida_1995}
K. Yosida. Functional analysis. Springer-Verlag, Berlin, 1995.



\end{thebibliography}
\end{document}